\numberwithin{equation}{section}
\newtheorem{theorem}{Theorem}[section]
\newtheorem{assumption}[theorem]{Assumption}
\newtheorem{lemma}[theorem]{Lemma}
\newtheorem{remark}[theorem]{Remark}
\newenvironment{proof}[1][Proof]{\noindent\textbf{#1.}\:}{\hfill $\square$}
\def\EE{\mathbb{E}}
\def\rm{\mathbb{R}^m}
\title{The convergence rate from discrete to continuous optimal investment stopping problem}
\author{Dingqian Sun\thanks{School of Mathematical Sciences, Fudan University, Shanghai, China, 200433. This work was completed during the visit in the University of Warwick as a joint Ph.D. student under the guidance of Dr.Liang and partially supported by China Scholarship Council; National Science Foundation of China (No. 11631004); and Science and Technology Commission of Shanghai Municipality (No. 14XD1400400).  \texttt{Email:dqsun14@fudan.edu.cn} }}
\date{}
\begin{document}
\maketitle

\begin{abstract}
We study the optimal investment stopping problem in both continuous and discrete case, where the investor needs to choose the optimal trading strategy and optimal stopping time concurrently to maximize the expected utility of terminal wealth. Based on the work \cite{HLT} with an additional stochastic payoff function, we characterize the value function for the continuous problem via the theory of quadratic reflected backward stochastic differential equation (BSDE for short) with unbounded terminal condition. In regard to discrete problem, we get the discretization form composed of piecewise quadratic BSDEs recursively under Markovian framework and the assumption of bounded obstacle, and provide some useful prior estimates about the solutions with the help of auxiliary forward-backward SDE system and Mallivian calculus. Finally, we obtain the uniform convergence and relevant rate from discretely to continuously quadratic reflected BSDE, which arise from corresponding optimal investment stopping problem through above characterization.\\

\noindent\textit{Keywords}: optimal investment stopping problem, utility maximization, \and quadratic reflected BSDE, \and discretely reflected BSDE, \and convergence rate\\
\noindent\textit{2000 MR Subject Classification}: 60G40,
\and 65C30, \and 93E20.
\end{abstract}


\section{Introduction}
In this paper, we consider a small trader in an incomplete financial market who can invest in risky stocks and a riskless asset and is also granted the right to stop the whole investment during the finite trading time interval $[0,T]$ to obtain corresponding payoff. The objective of the investor is to maximize her/his exponential utility of terminal wealth, which includes both the profit or loss on investment and the final payoff, by choosing the optimal trading strategy and optimal stopping time simultaneously. For the continuous case, the investor is allowed to stop the investment, which is like exercising an American option, at any time before $T$. While for the discrete case, the invester will be restricted to given discrete exercise time, where the payoff can be regarded as a kind of Bermudan option. 

Such utility maximization problems of mixed optimal stopping/control type have been initially studied in \cite{KW1}, which involved both consumption and final wealth under continuous framework and was reduced to a family of related pure optimal stopping problems via duality theory. Similar problems also arise in situations like pricing constrained American contingent claims, see \cite{KW2} for example, where the closed-form of hedging price of an American-type barrier option under the short-selling constraint has been obtained through the solution to a variational inequality. While different from the methods applied in these results, we will proceed by means of the connection between the original utility maximization problem (with a prespecified terminal time) and the theory of quadratic BSDEs, which will be introduced in more detail hereinafter, and pay more attention to the convergence from discrete to continuous problem.

With respect to the continuous problem, if we only consider the optimal strategy on time interval $[0,\tau]$ with fixed $\tau \in[0,T]$, it will then become the usual exponential utility maximization problem which has been widely discussed before, see \cite{Hu}, \cite{HLT}, \cite{MAM1} and \cite{MAM2}. To be specific, when the terminal payoff at $\tau$ is bounded, the problem has been completely solved in \cite{Hu} with the help of quadratic BSDE with bounded terminal data. It turns out that the value function of such problem can be characterized by the solution to a particular BSDE, whose generator is of quadratic growth in $z$-variable. Related theory of quadratic BSDE can be traced back to \cite{KobyBSDE} with bounded terminal value, where the existence and uniqueness of solutions were estabilished. Then it was extended to unbounded case to obtain the existence in \cite{PB1}, and subsequently the uniqueness with convex generators in \cite{PB},  \cite{Fre} and \cite{Fre1}. Recently, \cite{HLT} generalized the previous work, the exponential utility maximization problem with bounded payoff, to the unbounded framework on the basis of above development and studied utility indifference valuation of derivatives with unbounded payoffs as application.

Inspired by the above connection, we adjust the order of optimization and decompose our problem with extra payoff function into original utility maximization framework, which then reduced to a pure optimal stopping problem, and further obtain the value function in terms of the solution to a quadratic reflected BSDE, where the generator has almost the same form as in utility maximization problem in \cite{HLT}. While the existence and uniqueness of solutions to such quadratic reflected BSDE have been developed, see \cite{Koby} for bounded terminal value and obstacle and \cite{LX}, \cite{ErhanSong} for unbounded cases, the main difficulty left is to represent the solution to reflected BSDE via the supremum of solutions to a collection of BSDEs, which have the same quadratic generator as the former, i.e., $Y_t=\sup_{\tau\in[t,T]}Y_t(\tau), t\in[0,T]$ in subsection 2.3. Since here the group of BSDEs have different time horizon $[0,\tau]$ and terminal value $g_\tau$ and thus corresponding different pairs of solutions $(Y(\tau),Z(\tau))$, we can not directly apply the optimal stopping representation of reflected BSDEs (see Proposition 2.3 in \cite{El} for reference), but need to further use the comparison theorem and uniqueness in quadratic BSDE to prove such characterization, see Theorem \ref{continuous} for more details.

Regarding the discrete problem, we need to restructure the framework and proceed under Markovian system for the sake of following convergence analysis between the two forms. We first give a practical example to illustrate how we get the Markovian structure arising from previous continuous problem. While due to the additon of stochastic factor, the generator we considered herein will be more complicated than that in previous section, i.e., $f(t,x,z)$ of quadratic growth in $z$ and satisfying locally Lipschitz condition with respect to both $x$ and $z$, which is generalized in Assumption \ref{functions}. Then when restricted the exercise time to some given discrete time points, we can deduce recursively from the comparison result of BSDEs to get the backward discretization form, which is composed of piecewise BSDEs and actually a so-called discretely reflected BSDE, see subsection 3.2 for the form and related properties.

The main result of this paper is the convergence analysis and relevant rate from discrete to continuous optimal investment stopping problem. Thanks to previous discussion and characterization, we can now transform the problem into the convergence from discretely to continuously reflected BSDE, which has been studied when the generator is uniformly Lipschitz in all the variables, see \cite{Ma} based on the Euler scheme of forward SDE and section 3 in \cite{CHA}. Whereas originating from the utility maximization problem, we are facing reflected BSDEs with generator of quadratic growth, which brings us new difficulties during estimation and thus we have to restrict ourselves to the case of bounded and Lipschitz obstacle, and also the deterministic diffusion term in forward SDE at this stage.

Firstly,  with the help of the properties of quadratic BSDE and reflected BSDE with bounded terminal value, we can inductively prove the boundness of $\hat{Y}^{\Pi}$ in discretization form and the relationship $\hat{Y}^{\Pi}\leq Y$, which makes it possible to implement the usual techniques using to deal with  BSDE of quadratic growth.

Moreover, in order to handle the additional term coming from reflection, we need further properties about $\hat{Z}^{\Pi}$ appearing in piecewise BSDEs of the discretization form. We establish the connections between our discretization form and an auxiliary forward-backward SDE system defined on each time interval $[t_{i-1},t_i]$ with different terminal functions $\{u_i^{\Pi}\}_{1\leq i \leq n}$. Recall the existing results in Markovian FBSDE system that the solution $Z$ to quadratic BSDE with bounded and Lipschitz terminal $g(X_T)$ is controlled by $C(K_g+1)$, see \cite{Richou2011}, where $K_g$ is the Lipschitz constant of $g$. And then in \cite{Richou2012}, the prior estimate on $Z$ is generalized to the superquadratic case with unbounded terminal condition and also the case with random diffusion term in forward SDE and bounded terminal condition. While unfortunately, neither of them can cover the situation in our assumptions since here the locally Lipschitz coefficient of $x$ involves $z$. However, motivated by the proof of these results, we can make use of the BMO property of $Z$ and the representation derived from Mallivian calculus to fill this gap and get the explicit bound of $Z$. Together with the uniform Lipschitz continuity of terminal functions $\{u_i^{\Pi}\}$ in auxiliary forward-backward SDE system, we can obtain the boundness of $\hat{Z}^{\Pi}$ in discretization form at last. 

Finally, we give the complete proof of the uniform convergence from discretely to continuously quadratic reflected BSDE and obtain the convergence rate as follows when the obstacle $g$ is Lipschitz (and also the double rate if $g$ in $C_b^2$):
\begin{equation*}
\max_{1\leqslant i\leqslant n}\EE\left[\sup_{t\in[t_{i-1},t_i]}|\hat{Y}^\Pi_t-Y_t|^2\right]+\EE\left[\int^T_0 |\hat{Z}^\Pi_t-Z_t|^2dt\right] \leqslant C |\Pi|^{\frac{1}{2}}
\end{equation*}
and
\begin{equation*}
\max_{1\leqslant i\leqslant n}\EE\left[\sup_{t\in[t_{i-1},t_i]}|\hat{K}^\Pi_t-K_t|\right]
\leqslant C |\Pi|^{\frac{1}{4}}.
\end{equation*}

The paper is organized as follows. We discuss the continuous optimal investment stopping problem in section 2 and give the characterization of value function in terms of the solution to quadratic reflected BSDE. In section 3, we focus on Markovian framework and put forward the assumptions based on a practical example, and further obtain the discretization form for corresponding discrete problem. Then in section 4, after providing some auxiliary results regarding the discretization form with the aid of a forward-backward SDE system, we finally provide the convergence result of the two forms and section 5 concludes the paper.

\section{Continuous optimal investment stopping problem}

We fix a finite time horizon $[0,T]$ with $T>0$. Let $B$ be a $m$-dimensional standard Brownian motion defined on a complete probability space $(\Omega,\mathcal{F},\mathbb{P})$, and $\{\mathcal{F}_t\}_{t\geqslant 0}$ be the augumented natural filtration of $B$ which satisfies the usual conditions.

Let $\mathcal{P}$ denote the progressively measurable $\sigma$-field on $[0,T]\times\Omega$.

\subsection{Formulation}
Consider a financial market consisting of one risk-free bond with interest rate zero and $d \leqslant m$ stocks. In the case $d<m$, we face an incomplete market. The price process of the $i$th stock is described as
$$ \frac{dS^{i}_{t}}{S^{i}_{t}}=b^{i}_{t}dt+\sigma^{i}_{t}dB_{t},\quad i=1,...,d,$$
where $b^{i}$ (resp. $\sigma^{i}$) is an $\mathbb{R}$-valued (resp. $\mathbb{R}^{m}$-valued) predictable bounded stochastic process. The $\mathbb{R}^{d \times m}$-valued volatility matrix has full rank, that is, ${\sigma_{t} }{\sigma^{tr}_{t}}$ is invertible $\mathbb{P}$-a.s., for any $t\in[0,T]$. Furthermore, we assume the $\mathbb{R}^{m}$-valued risk premium process defined as
$$\theta_t=\sigma^{tr}_{t}({\sigma_{t} }{\sigma^{tr}_{t}})^{-1} b_{t},\quad t\in[0,T]$$
is also bounded.
For $i=1,...,d$, let $\pi^{i}_t$ denote the amount of money invested in stock $i$ at time $t$, and then the number of shares should be $\frac{\pi^{i}_t}{S^{i}_t}$. An $\mathbb{R}^{d}$-valued predictable process $\pi=(\pi_t)_{0\leqslant t\leqslant T}$ is called a self-financing trading strategy if $\int \pi \frac{dS}{S}$ is well defined, for example, $\int^{T}_{0}\vert\pi^{tr}_{t}\sigma_t\vert^{2}dt<\infty$, $\mathbb{P}$-a.s., which means the investor trades dynamically among the risk-free bond and the risky assets with her/his initial capital and no extra investment or withdrawal during the investment.

The wealth process with initial capital $x$ and trading strategy $\pi$ satisfies the equation
\begin{equation*}
X^{\pi}_t=x+\sum^{d}_{i=1}\int^{t}_{0}\frac{\pi^{i}_{u}}{S^{i}_{u}}dS^{i}_{u}=x+\int^{t}_{0}\pi^{tr}_{u}\sigma_{u}(dB_u+\theta_u du),\quad t\in[0,T].
\end{equation*}
Suppose there is an additional adapted process $(g_t)_{0\leqslant t\leqslant T}$ defined as the payoff at each time $t$ and recall that the investor has the right to stop at any time during the trading interval $[0,T]$, which means, if the investor chooses to stop at $\tau\in[0,T]$, then the total wealth of the investor is $X_{\tau}+g_{\tau}$. Here, $g_{\tau}\geqslant 0$ means an income, otherwise it is a flow-out. The objective of the investor is to choose both the optimal stopping time and an admissible self-financing trading strategy $\pi$ to maximize the expected utility of total wealth, which is in exponential form with the parameter $\alpha>0$, i.e.
\begin{align*}
V(0,x) &=\sup_{\tau\in[0,T]} \sup_{\pi\in \mathcal{U}_{ad}[0,\tau]} \EE\left[U_\alpha(X_{\tau}+g_{\tau})\right]\\
&=\sup_{\tau\in[0,T]} \sup_{\pi\in \mathcal{U}_{ad}[0,\tau]} \EE\left[-\exp\left(-\alpha\left(x+\int^{\tau}_{0}\pi^{tr}_{u}\frac{dS_{u}}{S_{u}}+g_{\tau}\right)\right)\right].
\end{align*}
Here $V(0,x)$ is called the value function at initial time $0$ and $\mathcal{U}_{ad}[0,\tau]$ is the admissible strategy set on $[0,\tau]$, given by Definition 1 in \cite{HLT} .

More generally, we can consider this mixed optimal stopping/control problem in dynamic form
\begin{equation}\label{valuefunc}
V(t,X_t)=\sup_{\tau\in[t,T]} \sup_{\pi\in \mathcal{U}_{ad}[t,\tau]} \EE\left[-\exp\left(-\alpha\left(X_t+\int^{\tau}_{t}\pi^{tr}_{u}\frac{dS_{u}}{S_{u}}+g_{\tau}\right)\right)\Big| \mathcal{F}_t\right],
\end{equation}
for all $t\leqslant T$. Here $X_t$ is the initial wealth when we start at the initial time $t$.

\subsection{Results on Quadratic reflected BSDEs with unbounded obstacle}
We first present the existence and uniqueness results of quadratic reflected BSDEs with the terminal data and obstacle satisfying exponential integrability, which were perfectly proved in \cite{ErhanSong}, and we will use the results to further solve the optimal investment stopping problem in continuous setting to implement the utility maximization.

A reflected BSDE with generator $f$, lower obstacle $g$ and terminal condition $g_T$ (here we only consider this special case) is an equation of the form
\begin{equation}\label{RBSDE}
g_t \leqslant Y_t=g_T+\int^{T}_{t}f(s,Z_s)ds-\int^{T}_{t}Z^{tr}_sdB_s+K_T-K_t,\quad t\in[0,T],
\end{equation}
satisfying the flat-off condition:
\begin{equation}\label{flatoff}
\int^{T}_{0}(Y_t-g_t)dK_t=0.
\end{equation}
Recall the generator $f: [0,T]\times\Omega\times\rm\rightarrow\mathbb{R}$ is a $\mathcal{P}\otimes\mathcal{B}(\rm)$ measurable function and the obstacle $g$ is an $\mathbb{R}$-valued continuous adapted process. 

Let $\EE^{\lambda, \lambda^\prime}[0,T]$ denote all the $\mathbb{R}$-valued continuous adapted processes $(Y_t)_{0\leqslant t\leqslant T}$ such that $\EE[e^{\lambda Y^{-}_*}+e^{\lambda^{\prime} Y^{+}_*}]<\infty$, where $Y^{\pm}_*\triangleq\sup_{t\in[0,T]}(Y_t)^{\pm}$ and $\EE^{p}[0,T]\triangleq\EE^{p,p}[0,T]$. $\mathbb{H}^{2p}([0,T];\rm)$ denotes all $\rm$-valued predictable processes $(Z_t)_{0\leqslant t\leqslant T}$ with $\EE(\int^{T}_0 \vert Z_t \vert^2_{\rm}dt)^p<\infty$ and $\mathbb{K}^p[0,T]$ denotes all the $\mathbb{R}$-valued continuous adapted processes $(K_t)_{0\leqslant t\leqslant T}$ , which are increasing with $K_0$=0 and $\EE\vert K_T\vert^p<\infty$.

\begin{assumption}\label{g1}
The obstacle $g$ satisfies the exponential integrable condition:
$$\EE\left[e^{\lambda\alpha g^{-}_*}+e^{\lambda^{\prime}\alpha g^{+}_*}\right]<\infty,$$
for some $\lambda, \lambda^\prime>6$ with $\frac{1}{\lambda}+\frac{1}{\lambda^\prime}<\frac{1}{6}$.
\end{assumption}

\begin{assumption}\label{g2}
The obstacle $g$ satisfies the arbitrary exponential integrable condition:
$$\EE\left[e^{p\vert g_*\vert}\right]<\infty,\quad\forall p\geqslant 1.$$
\end{assumption}

\begin{theorem}\label{RBSDEsolution}
Suppose that Assumption \ref{g1} holds with parameters $\lambda$ and $\lambda^\prime$. Then, the quadratic reflected BSDE (\ref{RBSDE}) and (\ref{flatoff}) with generator
\begin{equation}\label{f}
f(t,z)=-\frac{\alpha}{2} \min_{\pi_t \in \mathcal{C}}\Bigg|\sigma^{tr}_t \pi_t-\left(\frac{1}{\alpha}\theta_t-z\right)\Bigg|^2-z^{tr}\theta_t+\frac{1}{2\alpha}\vert\theta_t\vert^2
\end{equation}
admits a unique solution $(Y,Z,K)\in \bigcap_{p\in(1,\frac{\lambda\lambda^\prime}{\lambda+\lambda^\prime})}\EE^{\lambda\alpha, \lambda^\prime\alpha}[0,T]\times \mathbb{H}^{2p}([0,T];\rm)\times \mathbb{K}^p[0,T]$. Here $\mathcal{C}$ is a closed and convex set in the definition of admissible strategy satisfying $0\in \mathcal{C}$, which the strategy can take values in.

In addition, if $g$ satisfies Assumption \ref{g2}, then the unique solution belongs to $\EE^{p}[0,T]\times \mathbb{H}^{2p}([0,T];\rm)\times \mathbb{K}^p[0,T]$ for all $p\in[1,\infty)$, i.e.,
$$\EE\left[e^{{p\gamma Y_*}}+\left(\int^{T}_0 \vert Z_s\vert^2ds\right)^p+K^p_T\right]<\infty.$$
\end{theorem}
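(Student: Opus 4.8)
The plan is to treat this as a known result quoted from \cite{ErhanSong}, so the task reduces to verifying that the specific generator $f$ in \eqref{f} satisfies the structural hypotheses under which the cited existence/uniqueness theorem applies, and then invoking that theorem. First I would check that $f(t,z)$ is of quadratic growth in $z$ with the right one-sided structure: since $\mathcal{C}$ is closed, convex and contains $0$, the map $z\mapsto \min_{\pi_t\in\mathcal{C}}|\sigma_t^{tr}\pi_t-(\frac1\alpha\theta_t-z)|^2$ is the squared distance from the point $\frac1\alpha\theta_t-z$ to the closed convex set $\sigma_t^{tr}\mathcal{C}$, hence it is finite, nonnegative, convex in $z$, and $1$-Lipschitz-gradient (in particular locally Lipschitz) in $z$; taking $\pi_t=0$ as a competitor gives the upper bound $|\frac1\alpha\theta_t-z|^2$. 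Combined with the boundedness of $\theta$, this yields $|f(t,z)|\le C(1+|z|^2)$ and the requisite monotonicity/convexity properties, uniformly in $(t,\omega)$. I would also record that $f(t,0)$ is bounded (again by boundedness of $\theta$), so the terminal condition $g_T$ and obstacle $g$ are the only sources of unboundedness, and Assumption \ref{g1} on $g$ is exactly the exponential-integrability hypothesis required by the cited result.

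Next I would match the integrability bookkeeping. Assumption \ref{g1} requires $\lambda,\lambda'>6$ with $\frac1\lambda+\frac1{\lambda'}<\frac16$; after scaling $g$ by $\alpha$ (the generator carries the factor $\alpha$), the obstacle process $\alpha g$ satisfies $\EE[e^{\lambda\alpha g_*^-}+e^{\lambda'\alpha g_*^+}]<\infty$, which is the form in which \cite{ErhanSong} states the solvability threshold. The cited theorem then delivers a unique triple $(Y,Z,K)$ with $Y\in\EE^{\lambda\alpha,\lambda'\alpha}[0,T]$, $Z\in\mathbb{H}^{2p}$ and $K\in\mathbb{K}^p$ for every $p$ in the stated range $(1,\frac{\lambda\lambda'}{\lambda+\lambda'})$ — the exponent $\frac{\lambda\lambda'}{\lambda+\lambda'}$ being the harmonic-type combination that controls the moments of $Z$ and $K$ via the a priori estimates coming from an exponential change of variable $e^{\alpha Y}$. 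I would spell out that uniqueness is in this solution class and follows from the convexity of $f$ in $z$ together with the comparison principle for such quadratic reflected BSDEs.

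For the second assertion, I would argue by a bootstrap/localization on the integrability parameter. If $g$ satisfies Assumption \ref{g2}, i.e. $\EE[e^{p|g_*|}]<\infty$ for all $p\ge1$, then Assumption \ref{g1} holds for \emph{every} admissible pair $(\lambda,\lambda')$ with $\lambda,\lambda'>6$ and $\frac1\lambda+\frac1{\lambda'}<\frac16$; letting $\lambda,\lambda'\to\infty$ shows the solution lies in $\EE^{p}[0,T]$ for all finite $p$, and the range $(1,\frac{\lambda\lambda'}{\lambda+\lambda'})$ of exponents for $Z$ and $K$ exhausts $[1,\infty)$. By uniqueness in each class the solutions coincide, so the single triple $(Y,Z,K)$ enjoys all these moments simultaneously, giving $\EE[e^{p\gamma Y_*}+(\int_0^T|Z_s|^2ds)^p+K_T^p]<\infty$ for all $p\in[1,\infty)$.

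The genuinely substantive content is external — it is the existence/uniqueness and a priori estimates for quadratic reflected BSDEs with exponentially integrable data proved in \cite{ErhanSong}; within this paper the only real work, and the main point to get right, is the structural verification that the concrete $f$ in \eqref{f} (a scaled negative squared distance to the convex cone $\sigma_t^{tr}\mathcal{C}$, tilted by linear and bounded terms in $\theta$) meets those hypotheses with the correct constants, and the routine scaling that turns the $\alpha$ in the generator into the $\alpha$ in the exponents of Assumptions \ref{g1}–\ref{g2}.
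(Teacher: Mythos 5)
Your proposal follows essentially the same route as the paper: verify that the generator \eqref{f} satisfies the structural hypotheses of \cite{ErhanSong} (the two-sided bound $-\frac{\alpha}{2}|z|^2\leqslant f(t,z)\leqslant -z^{tr}\theta_t+\frac{1}{2\alpha}|\theta_t|^2$, obtained exactly as you do by taking $\pi_t=0$ and by nonnegativity of the squared distance, plus the convexity structure) and then invoke their Theorems 3.2 and 4.1. The only slip is a sign: since the squared distance to $\sigma_t^{tr}\mathcal{C}$ is convex in $z$ and is multiplied by $-\frac{\alpha}{2}$, the generator $f$ is \emph{concave} in $z$, which is the form of Assumption (H3) used in the paper.
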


\begin{proof}
One can easily check that $f$ with the form (\ref{f}) satisfies
\begin{equation}\label{fcondition}
-\frac{\alpha}{2}\vert z\vert^2 \leqslant f(t,z)\leqslant -z^{tr}\theta_t+\frac{1}{2\alpha}\vert \theta_t\vert^2,
	\end{equation}
and is concave in $z$, i.e., it satisfies Assumptions (H1) and (H3) in \cite{ErhanSong}. Consequently, we can get the existence and uniqueness directly from Theorem 3.2 and 4.1 there.
\end{proof}
\\

\subsection{Characterization of value funtion}
Now we can characterize the value function of the optimal problem by using the solution to the above reflected BSDE.

\begin{theorem}\label{continuous}
Suppose that $g$ satisfies Assumption \ref{g2} and let $(Y,Z,K)$ be the unique solution to quadratic reflected BSDE (\ref{RBSDE}) and (\ref{flatoff}) with generator (\ref{f}). Then, the value function (\ref{valuefunc}) of the continuous optimal investment stopping problem can be given by
\begin{equation*}
V(t,X_t)=-\exp(-\alpha(X_t+Y_t)),\quad \forall t\in[0,T].
\end{equation*}
\end{theorem}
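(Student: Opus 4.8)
The plan is to reduce the mixed stopping/control problem to a family of pure exponential utility maximization problems by interchanging the order of the two suprema, and then to identify each of those with a solution of a quadratic BSDE having generator \eqref{f}; finally I would stitch these together using the optimal-stopping representation of the reflected BSDE. Concretely, fix $t\in[0,T]$. For each stopping time $\tau\in[t,T]$ write
\begin{equation*}
V^\tau(t,X_t)=\sup_{\pi\in\mathcal{U}_{ad}[t,\tau]}\EE\left[-\exp\left(-\alpha\left(X_t+\int_t^\tau\pi_u^{tr}\frac{dS_u}{S_u}+g_\tau\right)\right)\Big|\mathcal{F}_t\right],
\end{equation*}
so that $V(t,X_t)=\sup_{\tau\in[t,T]}V^\tau(t,X_t)$. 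The first step is to recognize $V^\tau$ as the value function of the exponential utility maximization problem on $[t,\tau]$ with bounded-in-the-appropriate-sense terminal payoff $g_\tau$; by the characterization in \cite{HLT} (their Theorem, extended to unbounded payoff under Assumption \ref{g2}) there is a unique solution $(Y(\tau),Z(\tau))$ on $[0,\tau]$ to the quadratic BSDE with generator $f$ of the form \eqref{f} and terminal condition $Y_\tau(\tau)=g_\tau$, and
\begin{equation*}
V^\tau(t,X_t)=-\exp\left(-\alpha\left(X_t+Y_t(\tau)\right)\right).
\end{equation*}
Since $u\mapsto-e^{-\alpha u}$ is increasing, taking the supremum over $\tau$ gives $V(t,X_t)=-\exp(-\alpha(X_t+\operatorname{esssup}_{\tau\in[t,T]}Y_t(\tau)))$, so everything reduces to proving the representation
\begin{equation*}
Y_t=\operatorname*{esssup}_{\tau\in[t,T]}Y_t(\tau),\qquad t\in[0,T],
\end{equation*}
where $(Y,Z,K)$ is the solution of the reflected BSDE from Theorem \ref{RBSDEsolution}.

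For the $\geqslant$ inequality I would argue by localization and comparison. On $[0,\tau]$ the reflected triple satisfies $Y_s=g_\tau+\int_s^\tau f(u,Z_u)du-\int_s^\tau Z_u^{tr}dB_u+(K_\tau-K_s)$ with $K$ nondecreasing, so $Y$ restricted to $[0,\tau]$ is a supersolution of the BSDE with terminal value $Y_\tau\geqslant g_\tau$ and generator $f$; the comparison theorem for quadratic BSDEs (available here since $f$ satisfies \eqref{fcondition} and is concave, and the relevant exponential moments hold by Assumption \ref{g2} and Theorem \ref{RBSDEsolution}) then yields $Y_t\geqslant Y_t(\tau)$ for every $\tau$, hence $Y_t\geqslant\operatorname{esssup}_\tau Y_t(\tau)$. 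For the reverse inequality I would use the flat-off condition \eqref{flatoff}: define $\tau^*_t=\inf\{s\geqslant t:Y_s=g_s\}\wedge T$. On $[t,\tau^*_t)$ one has $Y_s>g_s$, so $dK=0$ there and $K$ is constant on $[t,\tau^*_t]$; moreover $Y_{\tau^*_t}=g_{\tau^*_t}$ on $\{\tau^*_t<T\}$ and $Y_T=g_T$ on $\{\tau^*_t=T\}$, so in all cases $Y_{\tau^*_t}=g_{\tau^*_t}$. Hence on $[t,\tau^*_t]$ the pair $(Y,Z)$ solves the (non-reflected) quadratic BSDE with terminal datum $g_{\tau^*_t}$ and generator $f$; by uniqueness for that BSDE (Theorem \ref{RBSDEsolution} ingredients, or \cite{HLT}) it coincides with $(Y(\tau^*_t),Z(\tau^*_t))$ on $[t,\tau^*_t]$, so $Y_t=Y_t(\tau^*_t)\leqslant\operatorname{esssup}_\tau Y_t(\tau)$. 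Combining the two inequalities gives the representation, and thus the theorem.

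The main obstacle I anticipate is the careful handling of integrability: the classical optimal-stopping representation of reflected BSDEs (Proposition 2.3 in \cite{El}) is stated for Lipschitz generators and square-integrable data, so one cannot invoke it directly here. The work is therefore in verifying that the comparison theorem and the uniqueness result apply in the present quadratic, unbounded setting — this means checking that $Y(\tau)$, $Y$ and the stochastic integrals $\int Z(\tau)^{tr}dB$ and $\int Z^{tr}dB$ are genuine (BMO or suitably integrable) martingales under the exponential-moment bounds of Assumption \ref{g2}, so that the change-of-measure / linearization argument underlying comparison is legitimate, and that the optimal stopping time $\tau^*_t$ is an admissible $\mathcal{F}_t$-stopping time with $Y_{\tau^*_t}$ having the right integrability to pin down the BSDE on the random horizon $[t,\tau^*_t]$. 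A secondary point to get right is that the generator $f$ in \eqref{f}, which encodes the inner maximization over $\pi\in\mathcal{C}$, is exactly the one whose BSDE represents $V^\tau$; this is where I would lean on the derivation in \cite{HLT} rather than redo the quadratic completion.
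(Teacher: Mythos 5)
Your proposal is correct and follows essentially the same route as the paper: reduce via the result of \cite{HLT} to the representation $Y_t=\sup_{\tau\in[t,T]}Y_t(\tau)$, prove $\geqslant$ by viewing the reflected solution on $[t,\tau]$ as a supersolution and invoking the comparison theorem for quadratic BSDEs with concave generator, and prove $\leqslant$ via the first hitting time of the obstacle (your $\tau^*_t$ is the paper's $D_t$), the flat-off condition, and uniqueness of the quadratic BSDE on $[t,D_t]$. Your closing remarks on integrability and on why Proposition 2.3 of \cite{El} cannot be applied directly match the concerns the paper itself flags.
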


\begin{proof}
For any fixed $t\in[0,T]$ and $\tau\in[t,T]$, we first solve the optimal control problem for the time interval $[t,\tau]$ by considering the following quadratic BSDE
\begin{equation}\label{BSDE}
Y_t(\tau)=g_\tau+\int^{\tau}_{t}f(s,Z_s(\tau))ds-\int^{\tau}_{t}Z^{tr}_s(\tau)dB_s,
\end{equation}
where the generator $f$ has the same form as in reflected BSDE, i.e., satisfies (\ref{f}). For convenience, we will note the above equation as BSDE $(f, g_\tau)$ thereafter. Additionally, we denote the solution to this BSDE as $(Y_{.}(\tau), Z_{.}(\tau))$ in order to emphasize its dependence on the terminal time $\tau$ and corresponding terminal value $g_\tau$. Then we can represent the latter part of the value function by dynamic programming principle as follows,
\begin{equation}\label{vf1}
\sup_{\pi\in \mathcal{U}_{ad}[t,\tau]} \EE\left[-\exp\left(-\alpha\left(X_t+\int^{\tau}_{t}\pi^{tr}_{u}\frac{dS_{u}}{S_{u}}+g_{\tau}\right)\right)\Big| \mathcal{F}_t\right]=-\exp(-\alpha(X_t+Y_t(\tau))),
\end{equation}
based on the existing result in \cite{HLT}, see Theorem 6. In turn, the original mixed optimal stopping/control problem becomes
\begin{equation}\label{vf2}
V(t,X_t)=\sup_{\tau\in[t,T]}[-\exp(-\alpha(X_t+Y_t(\tau)))]=-\exp\left(-\alpha\left(X_t+\sup_{\tau\in[t,T]}Y_t(\tau)\right)\right),
\end{equation}
and we only need to show that $Y_t=\sup_{\tau\in[t,T]}[Y_t(\tau)]$ for any $t\in[0,T]$.

First, for any $0\leqslant t\leqslant \tau\leqslant T$, let
$$\mathcal{Y}_t\triangleq Y_\tau+\int^{\tau}_{t}f(s,Z_s)ds-\int^{\tau}_{t}Z^{tr}_sdB_s $$
and we have $Y_t=\mathcal{Y}_t+K_\tau-K_t$.
Recalling that $(Y(\tau), Z(\tau))$ satisfies (\ref{BSDE}) on $[t,\tau]$ with the same generator as $(\mathcal{Y}, Z)$ and their terminal values satisfy $Y_\tau\geqslant g_\tau$, we can then deduce that $\mathcal{Y}_t\geqslant Y_t(\tau)$ via the comparison theorem of quadratic BSDE, see Theorem 5 in \cite{PB}, where the proof and result can be easily adapted to the case of concave generator with quadratic growth from below. Moreover, since $K$ is an increasing process, we have $K_\tau-K_t\geqslant 0$ and thus $Y_t=\mathcal{Y}_t+K_\tau-K_t\geqslant Y_t(\tau)$ for any $\tau\in[t,T]$, which gives rise to $Y_t\geqslant \sup_{\tau\in[t,T]}[Y_t(\tau)]$.

The idea of the following proof comes from representation of the solution to reflected BSDE, which is corresponded to an optimal stopping problem, see \cite{El}. For any $t\in[0,T]$, define $D_t\triangleq \inf \{ s\in[t,T]: Y_s=g_s\}$ and since $Y_T=g_T$, we can obtain $t \leqslant D_t\leqslant T$. Considering the reflected BSDE on interval $[t,D_t]$,
$$Y_t=Y_{D_t}+\int^{D_t}_{t}f(s,Z_s)ds-\int^{D_t}_{t}Z^{tr}_sdB_s+K_{D_t}-K_t,$$
by the continuity of $K$ and the flat-off condition (\ref{flatoff}), we have $K_{D_t}=K_t$ (which means $K_s\equiv K_t$ for any $s\in[t,D_t]$) and then $(Y.,Z.)$ becomes the solution to BSDE $(f, Y_{D_t})$ on $[t,D_t]$. In the meanwhile, note that $(Y.(D_t), Z.(D_t))$ is the solution to BSDE $(f, g_{D_t})$ on $[t,D_t]$ and the definition of $D_t$ futher yields $Y_{D_t}=g_{D_t}$. Thus by the uniqueness of solution to quadratic BSDE, see \cite{Fre}, we have $Y.=Y.(D_t)$ on $[t,D_t]$, and specifically $Y_t=Y_t(D_t)$, which completes the proof.
\end{proof}

\begin{remark}
We need to note here that for convenience, what we discussed in this paper is quadratic reflected BSDE with lower obstacle, whose solution we have proved in the above theorem can be characterized by the supremum of the solutions to a collection of BSDEs with the same generator. Therefore, we require consistency of the supremum whether it is taking inside or outside the exponential in the expression of value function (\ref{vf2}). To this end, when quoting the result in \cite{HLT}, we have to change the sign of $Y$appearing in the value function as (\ref{vf1}) and then the corresponding generator of BSDE. Actually, denoting the generator there as $F$, one can readily check that they satisfy $f(t,z)=-F(t,-z)$ and that is why we are considering concave generator in this section.
\end{remark}

\section{Discrete optimal investment stopping problem}

From this section, we will concentrate on Markovian framework, that is, the following decoupled forward-backward SDE with reflection
\begin{align}\label{M-FBSDE}
\begin{split}
& X_t=x+\int^t_0 b(s,X_s)ds+\int^t_0 \sigma(s)dB_s,\\
&Y_t=g(X_T)+\int^{T}_{t}f(s,X_s,Z_s)ds-\int^{T}_{t}Z^{tr}_sdB_s+K_T-K_t,\,t\in[0,T],\\
&Y_t\geqslant g(X_t)\quad\text{and}\quad \int^{T}_{0}(Y_t-g(X_t))dK_t=0.
\end{split}
\end{align}

For the functions that appear in the above system, we have following general assumptions.
\begin{assumption}\label{functions}
$b,\sigma,g$ and $f$ are deterministic functions that satisfy:\\
\noindent$\mathbf{(a)} \, b: [0,T]\times\mathbb{R}\rightarrow\mathbb{R}$ and $\sigma: [0,T]\rightarrow\rm$ are continuous functions and there exists constants $M_b, K_b$ and $M_\sigma$ such that $\forall t\in[0,T], \forall x,x^\prime \in \mathbb{R}$,
\begin{align*}
\vert b(t,x)\vert &\leqslant M_b(1+\vert x\vert),\\
\vert b(t,x)-b(t,x^{\prime})\vert &\leqslant K_b\vert x-x^{\prime}\vert,\\
\vert \sigma(t)\vert &\leqslant M_\sigma.
\end{align*}
\noindent$\mathbf{(b)} \, f:  [0,T]\times\mathbb{R}\times\rm\rightarrow\mathbb{R}$ and $g: \mathbb{R}\rightarrow\mathbb{R}$ are continuous functions and there exists constants $M_f, K_x, K_z, K_g$ and $M_g$ such that  $\forall t\in[0,T], \forall x,x^\prime \in \mathbb{R}$ and $\forall z,z^\prime \in \rm$,
\begin{align*}
\vert f(t,x,z)\vert &\leqslant M_f+\frac{\alpha}{2}\vert z\vert^2,\\
\vert f(t,x,z)-f(t,x^{\prime},z)\vert & \leqslant K_x(1+\vert z\vert)\vert x-x^{\prime}\vert,\\
\vert f(t,x,z)-f(t,x,z^{\prime})\vert & \leqslant K_z(1+\vert z\vert+\vert z^{\prime}\vert)\vert z-z^{\prime}\vert,\\
\vert g(x)-g(x^{\prime})\vert &\leqslant K_g\vert x-x^{\prime}\vert,\\
\vert g(x)\vert &\leqslant M_g.
\end{align*}
\end{assumption}

Let $\mathbb{S}^\infty[0,T]$ denote the set of $\mathbb{R}$-valued progressively measurable bounded processes and $\mathbb{S}^p[0,T]$ denote the space of all $\mathbb{R}$-valued adapted processes $(Y_t)_{t\in[0,T]}$ such that $\EE[\sup_{0\leqslant t\leqslant T} \vert Y_t\vert^p] <\infty$. Then under the above assumptions, we know the decoupled system (\ref{M-FBSDE}) with bounded terminal condition and bounded obstacle has a unique solution $(X,Y,Z,K)\in \mathbb{S}^2[0,T]\times\mathbb{S}^\infty[0,T]\times\mathbb{H}^2([0,T];\rm)\times\mathbb{K}^2[0,T]$. For more details of this result, we refer to \cite{Koby}.

\subsection{A special case as connection}
We will see from a special case with subspace portfolio constraints in this subsection that how we can get the above Markovian structure from the previous general problem. Here for simplicity, we consider a market with a single stock whose coefficients depend on a single stochastic factor driven by a 2-dim Brownian motion, that is, $m=2$, $d=1$ and 
\begin{equation*}
\frac{dS_t}{S_t}=b(t,V_t)dt+\sigma(t,V_t)dB_{1,t},
\end{equation*}
\begin{equation}\label{SF}
dV_t=\eta(V_t)dt+(\kappa_1,\kappa_2)
\begin{pmatrix} dB_{1,t}\\dB_{2,t} \end{pmatrix},
\end{equation}
where $\kappa_1, \kappa_2$ are two positive constants satisfying $|\kappa_1|^2+|\kappa_2|^2=1$. We assume that $b, \sigma$ and $\eta$ are uniformly bounded and Lipschitz with respect to $x$, and furthermore, $\sigma \geqslant\delta$ for some $\delta>0$. Then the wealth process is
\begin{equation*}
dX_t=\pi_t \frac{dS_t}{S_t}=\pi_t \left[b(t,V_t)dt+\sigma(t,V_t)dB_{1,t} \right].
\end{equation*}
Setting $\mathcal{C}=\mathbb{R}$ and $\theta(t,V_t)\triangleq \frac{b(t,V_t)}{\sigma(t,V_t)}$, we know from the above assumptions that $\theta$ is also both bounded and Lipschitz. Supposing further that the payoff has the form as a function of stochastic factor $V$, that is, $g(V_\cdot)$, the reflected BSDE (\ref{RBSDE}) will then become
\begin{equation}\label{Eg1}
g(V_t) \leqslant Y_t=g(V_T)+\int^{T}_{t}f(s,Z_s)ds-\int^{T}_{t}Z^{tr}_sdB_s+K_T-K_t,\quad t\in[0,T],
\end{equation}
where the generator in (\ref{f}) reduces to
\begin{equation*}
f(t,z)=f(t,(z_1,z_2))=-\frac{\alpha}{2}|z_2|^2-z_1\theta(t,V_t)+\frac{1}{2\alpha}|\theta(t,V_t)|^2.
\end{equation*}

If we regard the equation of stochastic factor (\ref{SF}) as the forward SDE and let $\bar{f}(t,x,z)\triangleq -\frac{\alpha}{2}|z_2|^2-z_1\theta(t,x)+\frac{1}{2\alpha}|\theta(t,x)|^2$, then $\bar{f}$ is now a deterministic function and (\ref{Eg1}) becomes
\begin{equation}
g(V_t) \leqslant Y_t=g(V_T)+\int^{T}_{t}\bar{f}(s,V_s,Z_s)ds-\int^{T}_{t}Z^{tr}_sdB_s+K_T-K_t,\quad t\in[0,T].
\end{equation}
Combined with (\ref{SF}), they constitute a Markovian system as (\ref{M-FBSDE}) and one can easily check that $\bar{f}$ satisfies the above Assumption. 

In order to avoid confusion about the notations, we will still use $b$ and $\sigma$ to denote the coefficients of forward SDE and $(X,Y,Z,K)$ the solution of forward-backward SDE with reflection in the following discussion, and consider the discrete problem and subsequent convergence under the generalized Assumption \ref{functions}.

\subsection{Discretization form}

We  continue to consider the optimal investment stopping problem in a discrete setting, which means the investor is only allowed to stop the investment process at given discrete time points $\Pi\triangleq\{t_i,\, i=0,1,\ldots,n\,\vert \,0=t_0<t_1<t_2<\cdots<t_n=T\}$. Denote $D[t,T]\triangleq [t,T]\cap\Pi$ and $\Delta t_i=t_i-t_{i-1}$ for $i=1,\ldots,n$, and let $|\Pi|\triangleq \max_{1\leqslant i\leqslant n}\Delta t_i$. The corresponding value function for discrete problem becomes
\begin{align*}
&\sup_{\tau\in D[t,T]} \sup_{\pi\in \mathcal{U}_{ad}[t,\tau]} \EE\left[-\exp\left(-\alpha\left(X_t+\int^{\tau}_{t}\pi^{tr}_{u}\frac{dS_{u}}{S_{u}}+g(X_{\tau})\right)\right)\Big| \mathcal{F}_t\right] \\
&= \sup_{\tau\in D[t,T]}[-\exp(-\alpha(X_t+Y_t(\tau)))]\\
&=-\exp\left[-\alpha\left(X_t+\max_{\tau\in D[t,T]}Y_t(\tau)\right)\right],
\end{align*}
where $Y_\cdot (\tau)$ satisfies the BSDE
\begin{equation*}
Y_t(\tau)=g(X_\tau)+\int^{\tau}_{t}f(s,X_s,Z_s(\tau))ds-\int^{\tau}_{t}Z^{tr}_s(\tau)dB_s,\quad t\in[0,\tau].
\end{equation*}

Define $\hat{Y}^{\Pi}_t\triangleq \max_{\tau\in D[t,T]}Y_t(\tau)=\max_{\tau\in D[t,T]}\EE[g(X_{\tau})+\int^{\tau}_t f(s,X_s, Z_s(\tau))ds\vert \mathcal{F}_t]$ for any $t\in [0,T]$. Then the value function in discrete case turns out to be $V^{\Pi}(t,X_t)=-\exp[-\alpha(X_t+\hat{Y}^{\Pi}_t)]$, which indicates that we only need to focus on the difference between $\hat{Y}^{\Pi}$ and $Y$. Thanks to the comparison result of quadratic BSDEs, we can characterize $\hat{Y}^{\Pi}$ inductively and it is actually a so-called discretely reflected BSDE, which means that reflection only operates at specific time points ${\Pi}$. We will depict the processes $\bar{Y}^{\Pi}$ and $(\hat{Y}^{\Pi},\hat{Z}^{\Pi},\hat{K}^{\Pi})$ recursively as follows and in order to simplify the notation, we will proceed with the case $m=1$, while one can easily generalize the results to $m$-dimension:\\

\noindent$\bullet$ $\hat{Y}^{\Pi}_{t_n}=\bar{Y}^{\Pi}_{t_n}=g(X_T);$\\
\noindent$\bullet$ For $i=n, n-1,\cdots,1$ and $t\in[t_{i-1},t_i)$, $(\bar{Y}^{\Pi},\hat{Z}^{\Pi})$ is the solution to quadratic BSDE:
\begin{equation}\label{Ybar}
\bar{Y}^{\Pi}_t=\hat{Y}^{\Pi}_{t_i}+\int^{t_i}_t f(r,X_r,\hat{Z}^{\Pi}_r)dr-\int^{t_i}_t \hat{Z}^{\Pi}_r dB_r;
\end{equation}
\noindent$\bullet$ For $i=n, n-1,\cdots,1$, define $\hat{Y}^{\Pi}_{t}=\bar{Y}^{\Pi}_t$ for any $t\in(t_{i-1},t_i)$ and $\hat{Y}^{\Pi}_{t_{i-1}}=\bar{Y}^{\Pi}_{t_{i-1}}\vee g(X_{t_{i-1}})$;\\
\noindent$\bullet$ Let $\hat{K}^{\Pi}_0\triangleq 0$ and for $i=1,2,\cdots,n$, $t\in(t_{i-1},t_i]$, and define $\hat{K}^{\Pi}_t\equiv \hat{K}^{\Pi}_{t_i}\triangleq \sum^i_{j=1}(\hat{Y}^{\Pi}_{t_{j-1}}-\bar{Y}^{\Pi}_{t_{j-1}})$.\\

Since $\hat{K}^{\Pi}_{t_i}\in\mathcal{F}_{t_{i-1}}$ for any $1\leqslant i\leqslant n$, we know that $\hat{K}^{\Pi}$ is $\{\mathcal{F}_{t}\}$-predictable. In addition, we can deduce from definition that $\hat{K}^{\Pi}_{t_i}-\hat{K}^{\Pi}_{t_{i-1}}=\hat{Y}^{\Pi}_{t_{i-1}}-\bar{Y}^{\Pi}_{t_{i-1}}$, which leads to
\begin{equation}\label{Yhat}
\hat{Y}^{\Pi}_{t_{i-1}}=\hat{Y}^{\Pi}_{t_i}+\int^{t_i}_{t_{i-1}} f(r,X_r,\hat{Z}^{\Pi}_r)dr-\int^{t_i}_{t_{i-1}} \hat{Z}^{\Pi}_r dB_r+\hat{K}^{\Pi}_{t_i}-\hat{K}^{\Pi}_{t_{i-1}},
\end{equation}
and that is why it is called discretely reflected BSDE.

\begin{lemma}\label{bddY}
Let Assumption \ref{functions} hold. Then, we have $(\romannumeral1)$ both $\bar{Y}^{\Pi}$ and $\hat{Y}^{\Pi}$ are bounded by $M_g+M_f T$, uniformly in $\Pi$; $({\romannumeral2})$
$\bar{Y}^{\Pi}_{t}\leqslant \hat{Y}^{\Pi}_{t}\leqslant Y_t$ for all $t\in[0,T]$.
\end{lemma}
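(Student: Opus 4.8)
The plan is to prove both parts by backward induction on $i$, starting from the terminal time $t_n = T$ where $\hat{Y}^\Pi_{t_n} = \bar{Y}^\Pi_{t_n} = g(X_T)$, so that $|\hat{Y}^\Pi_{t_n}| \le M_g \le M_g + M_f T$ and $\hat{Y}^\Pi_{t_n} = g(X_T) = Y_T$, establishing the base case of both $(\romannumeral1)$ and $(\romannumeral2)$ simultaneously. For the inductive step on $[t_{i-1}, t_i)$, I would assume that $\hat{Y}^\Pi_{t_i}$ is bounded by $M_g + M_f T$ and that $\hat{Y}^\Pi_{t_i} \le Y_{t_i}$, and then propagate both facts down to $t_{i-1}$.

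For part $(\romannumeral1)$, on the interval $[t_{i-1}, t_i)$ the process $\bar{Y}^\Pi$ solves the quadratic BSDE (\ref{Ybar}) with terminal value $\hat{Y}^\Pi_{t_i}$, which is bounded by the induction hypothesis, and generator $f$ satisfying $|f(r,X_r,z)| \le M_f + \frac{\alpha}{2}|z|^2$ by Assumption \ref{functions}(b). The key observation is that the comparison between $\bar{Y}^\Pi$ and the solution of the \emph{constant-generator} BSDEs with terminal data $\pm(M_g + M_f(T - t_i))$ and drivers $\mp M_f$ — i.e. comparing against the deterministic functions $t \mapsto \pm\bigl(M_g + M_f(T - t)\bigr)$, which are themselves solutions to BSDEs with zero $Z$-component and constant driver $\mp M_f$ — yields $|\bar{Y}^\Pi_t| \le M_g + M_f(T - t) \le M_g + M_f T$ for $t \in [t_{i-1}, t_i)$. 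One must be slightly careful that the upper bound in (\ref{fcondition})-type estimates only controls $f$ up to the quadratic term in $z$; the clean way is to note that since the upper/lower barrier functions have vanishing $Z$, the comparison theorem for quadratic BSDEs (the version quoted from \cite{PB}, valid for concave generators with quadratic growth) applies directly because one of the two compared generators is affine. Then $\hat{Y}^\Pi_{t_{i-1}} = \bar{Y}^\Pi_{t_{i-1}} \vee g(X_{t_{i-1}})$ is bounded by $\max\{M_g + M_f T, M_g\} = M_g + M_f T$, and $\hat{Y}^\Pi_t = \bar{Y}^\Pi_t$ on the open interval, closing the induction for $(\romannumeral1)$.

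For part $(\romannumeral2)$, the inequality $\bar{Y}^\Pi_t \le \hat{Y}^\Pi_t$ is immediate from the definition, since $\hat{Y}^\Pi$ equals $\bar{Y}^\Pi$ on $(t_{i-1}, t_i)$ and $\hat{Y}^\Pi_{t_{i-1}} = \bar{Y}^\Pi_{t_{i-1}} \vee g(X_{t_{i-1}}) \ge \bar{Y}^\Pi_{t_{i-1}}$. For $\hat{Y}^\Pi_t \le Y_t$, on $[t_{i-1}, t_i)$ I would compare $\bar{Y}^\Pi$, which solves BSDE $(f, \hat{Y}^\Pi_{t_i})$ on that interval, with $Y$ viewed on $[t_{i-1}, t_i]$: writing $\mathcal{Y}_t := Y_{t_i} + \int_t^{t_i} f(r,X_r,Z_r)\,dr - \int_t^{t_i} Z_r\,dB_r$, we have $Y_t = \mathcal{Y}_t + (K_{t_i} - K_t) \ge \mathcal{Y}_t$ since $K$ is increasing, and $\mathcal{Y}$ solves BSDE $(f, Y_{t_i})$ on $[t_{i-1}, t_i]$ with the same generator as $\bar{Y}^\Pi$. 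Since $\hat{Y}^\Pi_{t_i} \le Y_{t_i}$ by the induction hypothesis, the comparison theorem for quadratic BSDEs gives $\bar{Y}^\Pi_t \le \mathcal{Y}_t \le Y_t$ on $[t_{i-1}, t_i)$, and at the left endpoint $\hat{Y}^\Pi_{t_{i-1}} = \bar{Y}^\Pi_{t_{i-1}} \vee g(X_{t_{i-1}}) \le Y_{t_{i-1}} \vee g(X_{t_{i-1}}) = Y_{t_{i-1}}$, using the obstacle constraint $Y_{t_{i-1}} \ge g(X_{t_{i-1}})$ from (\ref{M-FBSDE}). This closes the induction.

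The main obstacle I anticipate is the careful invocation of the comparison theorem for quadratic BSDEs: the generator $f$ here has quadratic growth in $z$ and is only \emph{locally} Lipschitz (the estimate $|f(t,x,z) - f(t,x,z')| \le K_z(1 + |z| + |z'|)|z - z'|$), so one cannot apply the classical Lipschitz comparison. The resolution is that in every comparison above, one of the two processes being compared has either a bounded $Z$-component or, in the barrier case, a zero $Z$-component with an affine generator, which is precisely the setting in which the quadratic comparison theorem of \cite{PB} (adapted to concave/quadratic-from-below generators as noted after (\ref{BSDE})) applies — and the boundedness of all the $\bar{Y}^\Pi_{t_i}$ established in part $(\romannumeral1)$ is what guarantees the $\mathbb{H}^2$-integrability and the applicability of that theorem. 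A secondary technical point is verifying that the piecewise BSDEs (\ref{Ybar}) are themselves well-posed with solutions in $\mathbb{S}^\infty \times \mathbb{H}^2$ on each subinterval, which follows from \cite{KobyBSDE} given the bounded terminal data $\hat{Y}^\Pi_{t_i}$.
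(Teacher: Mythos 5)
Your proof is correct and follows essentially the same route as the paper: backward induction over the grid, with the a priori bound $M_g+M_f(T-t_i)$ propagated through each piecewise quadratic BSDE for part $(\romannumeral1)$, and the comparison theorem combined with the monotonicity of $K$ and the obstacle constraint $Y_t\geqslant g(X_t)$ for part $(\romannumeral2)$. The only cosmetic differences are that you derive the bound in $(\romannumeral1)$ by comparison against deterministic barrier solutions where the paper cites Corollary 2.2 of \cite{KobyBSDE} directly, and that your induction hypothesis should be stated as $|\hat{Y}^\Pi_{t_i}|\leqslant M_g+M_f(T-t_i)$ (which is what your base case and barrier choice actually propagate) rather than the weaker $M_g+M_f T$.
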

\begin{proof}
$(\romannumeral1)$ For the first claim, since $\Vert \bar{Y}^{\Pi}\Vert_\infty\leqslant\Vert \hat{Y}^{\Pi}_{t_n}\Vert_\infty+M_f \Delta t_n=\Vert g(X_{T})\Vert_\infty+M_f\Delta t_n\leqslant M_g+M_f\Delta t_n\leqslant M_g+M_f T$ on $[t_{n-1},t_n)$ by Corollary 2.2 in \cite{KobyBSDE} and $\vert \hat{Y}^{\Pi}_{t_{n-1}}\vert\leqslant\vert\bar{Y}^{\Pi}_{t_{n-1}}\vert\vee \vert g(X_{t_{n-1}})\vert\leqslant M_g+M_f\Delta t_n$, the conclusion holds for the first interval $[t_{n-1},t_n)$ and also for $t=t_n$.

Then for the next interval $[t_{n-2},t_{n-1})$, using the Corollary again we have $\Vert\bar{Y}^{\Pi}\Vert_\infty\leqslant\Vert \hat{Y}^{\Pi}_{t_{n-1}}\Vert_\infty+M_f\Delta t_{n-1}\leqslant M_g+M_f(\Delta t_{n-1}+\Delta t_{n})\leqslant M_g+M_f T$ on $[t_{n-2},t_{n-1})$, and similarly $|\hat{Y}^{\Pi}_{t_{n-2}}|\leqslant|\bar{Y}^{\Pi}_{t_{n-2}}|\vee |g(X_{t_{n-2}})|\leqslant M_g+M_f(\Delta t_{n-1}+\Delta t_{n})$.

By analogy, we can finally obtain $\Vert\bar{Y}^{\Pi}\Vert_\infty\vee\Vert\hat{Y}^{\Pi}\Vert_\infty\leqslant M_g +M_f(\sum^n_{j=i} \Delta t_{j})\leqslant M_g+M_f T$ on $[t_{i-1},t_i)$ for any $i=1,\ldots,n$, i.e., $\Vert\bar{Y}^{\Pi}\Vert_\infty$ and $\Vert\hat{Y}^{\Pi}\Vert_\infty$ are bounded by $M_g+M_f T$ on the whole interval $[0,T]$, and the bound is obviously independent of $\Pi$.

$({\romannumeral2})$ Observing that $\bar{Y}^{\Pi}$ and $\hat{Y}^{\Pi}$ may not be equal only on $\Pi$, one can easily get the first inequality by definition. As for the second one, we first have $ \hat{Y}^{\Pi}_{t_n}=Y_{t_n}=g(X_T)$. Assume $\hat{Y}^{\Pi}_{t_i}\leqslant Y_{t_i}$ holds. Then, similarly as the arguments in the proof of Theorem \ref{continuous}, comparing (\ref{Ybar}) and 
\begin{equation*}
Y_t=Y_{t_i}+\int^{{t_i}}_{t}f(s,X_s,Z_s)ds-\int^{{t_i}}_{t}Z_sdB_s+K_{t_i}-K_t,
\end{equation*}
the comparison result of quadratic BSDE with bounded terminals and the fact $K$ is increasing further yield that $\bar{Y}^{\Pi}_{t}\leqslant Y_t$ for any $t\in[ t_{i-1},t_i)$. Moreover, since $g(X_t)$ is the lower obstacle of $Y_t$, we have $\hat{Y}^{\Pi}_{t_{i-1}}\leqslant \bar{Y}^{\Pi}_{t_{i-1}}\vee g(X_{t_{i-1}}) \leqslant Y_{t_{i-1}}$, and thus $\hat{Y}^{\Pi}_{t}\leqslant Y_t$ for $t\in[ t_{i-1},t_i)$. Then by induction, we can conclude the second inequality.

\end{proof}

\section{Convergence analysis}
In consideration of the connections we have built respectively for the continuous and discrete optimal investment stopping problem in previous sections, we may now lay emphasis on the convergence from discretely to continuously quadratic reflected BSDE.

\subsection{Auxiliary results}
We will introduce a forward-backward SDE system on each interval $[t_{i-1},t_i]$ instead of analyzing the discretization form directly. Define $u^{\Pi}_n(x)=g(x)=\tilde{Y}_T$ and for $i=n, n-1,\cdots, 1$, let $(\tilde{Y},\tilde{Z})$ be the solution to the BSDE defined piecewise by
\begin{equation}\label{u-FBSDE}
\tilde{Y}_t=u^{\Pi}_i(X_{t_i}({t_{i-1}},x))+\int^{t_i}_t f(r,X_r({t_{i-1}},x),\tilde{Z}_r)dr-\int^{t_i}_t \tilde{Z}_r dB_r,\quad t\in[t_{i-1},t_i),
\end{equation}
where $X_\cdot ({t_{i-1}},x)$ represents the solution to forward SDE in (\ref{M-FBSDE}) starting from $({t_{i-1}},x)$. Let $u^{\Pi}_{i-1}(x)=\tilde{Y}_{t_{i-1}}(x)\vee g(x)$ and notice that here we write as the form $\tilde{Y}_{t_{i-1}}(x)$ in order to show the dependence of $\tilde{Y}$ on the initial value $x$ of the SDE.

\begin{lemma}\label{u_i}
By the definition of the collection of functions $\{u^{\Pi}_i\}_{1\leqslant i\leqslant n}$, we have $\hat{Y}^{\Pi}_{t_{i}}=u^{\Pi}_i(X_{t_i})$. Here $X_{t_i}\triangleq X_{t_i}(0,x)$.
\end{lemma}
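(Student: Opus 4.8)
The plan is to run a backward induction on $i$ from $n$ down to $0$, the two structural ingredients being the flow (Markov) property of the forward SDE in (\ref{M-FBSDE}) — namely $X_r(0,x)=X_r\big(t_{i-1},X_{t_{i-1}}(0,x)\big)$ for $r\geqslant t_{i-1}$ — and the uniqueness of the solution to a quadratic BSDE with bounded terminal datum. Uniqueness is available here because every terminal condition that enters the recursion is bounded by $M_g+M_fT$ by Lemma \ref{bddY}, and $u^\Pi_i$ inherits the same bound, so all BSDEs involved fall in the bounded-terminal quadratic framework of \cite{Koby}. The base case $i=n$ is immediate: by definition $\hat{Y}^\Pi_{t_n}=g(X_T)=u^\Pi_n(X_{t_n})$.

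For the inductive step I would assume $\hat{Y}^\Pi_{t_i}=u^\Pi_i(X_{t_i})$ and recall from (\ref{Ybar}) that on $[t_{i-1},t_i)$ the pair $(\bar{Y}^\Pi,\hat{Z}^\Pi)$ solves the BSDE with driver $f(r,X_r,\cdot)$ and terminal value $\hat{Y}^\Pi_{t_i}$. Using the induction hypothesis and the flow property, this terminal value equals $u^\Pi_i\big(X_{t_i}(t_{i-1},X_{t_{i-1}})\big)$, and since $X_r=X_r(t_{i-1},X_{t_{i-1}})$ for $r\in[t_{i-1},t_i]$, the equation satisfied by $\bar{Y}^\Pi$ on $[t_{i-1},t_i]$ is exactly the piecewise BSDE (\ref{u-FBSDE}) with the frozen initial point $x$ replaced by the $\mathcal{F}_{t_{i-1}}$-measurable random variable $X_{t_{i-1}}=X_{t_{i-1}}(0,x)$. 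Invoking the Markovian representation of the solution of (\ref{u-FBSDE}), i.e. $\tilde{Y}_t=v_i\big(t,X_t(t_{i-1},x)\big)$ for a deterministic measurable function $v_i$ with $v_i(t_{i-1},\cdot)=\tilde{Y}_{t_{i-1}}(\cdot)$, together with uniqueness in the class of bounded solutions, one concludes $\bar{Y}^\Pi_t=v_i(t,X_t)$ on $[t_{i-1},t_i)$, and in particular $\bar{Y}^\Pi_{t_{i-1}}=v_i(t_{i-1},X_{t_{i-1}})=\tilde{Y}_{t_{i-1}}(X_{t_{i-1}})$.

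It then remains to combine this with the definitions of $\hat{Y}^\Pi_{t_{i-1}}$ and of $u^\Pi_{i-1}$:
\[
\hat{Y}^\Pi_{t_{i-1}}=\bar{Y}^\Pi_{t_{i-1}}\vee g(X_{t_{i-1}})=\tilde{Y}_{t_{i-1}}(X_{t_{i-1}})\vee g(X_{t_{i-1}})=u^\Pi_{i-1}(X_{t_{i-1}}),
\]
which closes the induction and proves the lemma.

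The step I expect to be the main obstacle is the rigorous justification that one may substitute the random variable $X_{t_{i-1}}$ for the deterministic initial condition $x$ in (\ref{u-FBSDE}) and still recover the solution of (\ref{Ybar}); this hinges on the existence of the deterministic function $v_i$ and on measurable dependence of the quadratic-BSDE solution on the forward initial data, so before the induction I would record (or cite, e.g. from \cite{Koby} and the Markovian quadratic-BSDE literature) that under Assumption \ref{functions} such a $v_i$ exists and is the unique bounded solution. Once that is in hand, the remainder of the argument is a routine bookkeeping with the flow property.
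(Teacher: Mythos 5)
Your proposal is correct and follows essentially the same route as the paper's proof: backward induction, the flow property $X_r(0,x)=X_r(t_{i-1},X_{t_{i-1}})$, uniqueness of the quadratic BSDE with bounded terminal datum to identify $\bar{Y}^{\Pi}$ with $\tilde{Y}(X_{t_{i-1}})$ on $[t_{i-1},t_i)$, and then taking the maximum with $g(X_{t_{i-1}})$. The only difference is that you explicitly flag the substitution of the random initial datum $X_{t_{i-1}}$ for $x$ in (\ref{u-FBSDE}) as requiring a Markovian representation $v_i$; the paper handles this implicitly by simply setting $x=X_{t_{i-1}}$ and invoking uniqueness, so your extra care is harmless but not a deviation in method.
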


\begin{proof}
We will prove this lemma by induction. Firstly, for $i=n$, $\hat{Y}^{\Pi}_{t_{n}}=g(X_{T})=u^{\Pi}_n(X_{t_n})$. If we assume that the result holds for $i$, i.e., $\hat{Y}^{\Pi}_{t_{i}}=u^{\Pi}_{i}(X_{t_{i}})$, then when it comes to $i-1$, we have $\hat{Y}^{\Pi}_{t_{i-1}}=\bar{Y}^{\Pi}_{t_{i-1}}\vee g(X_{t_{i-1}})$ and $u^{\Pi}_{i-1}(X_{t_{i-1}})=\tilde{Y}_{t_{i-1}}(X_{t_{i-1}})\vee g(X_{t_{i-1}})$ separately. 

Comparing (\ref{Ybar}) and (\ref{u-FBSDE}) when $x=X_{t_{i-1}}$ and noticing that $X_{r}\triangleq X_{r}(0,x)=X_r(t_{i-1},X_{t_{i-1}})$ for any $r\in[t_{i-1},t_i]$, we know that the two BSDEs have the same generator. Especially, we have $X_{t_i}({t_{i-1}},X_{t_{i-1}})=X_{t_i}$ and then $u^{\Pi}_i(X_{t_i}({t_{i-1}},X_{t_{i-1}}))=u^{\Pi}_i(X_{t_i})=\hat{Y}^{\Pi}_{t_{i}}$ by assumption, which means the BSDEs have the same terminal value as well, which is bounded as proved. Then by the uniqueness of quadratic BSDE with bounded terminal condition, see \cite{KobyBSDE}, we can conclude that $\bar{Y}^{\Pi}_t=\tilde{Y}_t(X_{t_{i-1}})$ on $[{t_{i-1}},t_i)$, and specifically, $\bar{Y}^{\Pi}_{t_{i-1}}=\tilde{Y}_{t_{i-1}}(X_{t_{i-1}})$. Consequently, we obtain $\hat{Y}^{\Pi}_{t_{i-1}}=u^{\Pi}_{i-1}(X_{t_{i-1}})$ by taking maximum with $g(X_{t_{i-1}})$ on both sides, which completes the proof.
\end{proof}\\

Let us introduce the following more general forward-backward SDE on $[0,T]$ for later use,

\begin{align}\label{M-BSDE}
\begin{split}
& X_t=x+\int^t_0 b(s,X_s)ds+\int^t_0 \sigma(s)dB_s,\\
&Y_t=g(X_T)+\int^{T}_{t}f(s,X_s,Z_s)ds-\int^{T}_{t}Z_sdB_s,
\end{split}
\end{align}
and give a crucial estimate of $Z$ in next lemma.

\begin{lemma}\label{bddZ}
Suppose Assumption \ref{functions} holds. Then, there exists a version of $Z$ such that $\forall t\in[0,T]$,
$$|Z_t|\leqslant \exp(KT)[M_\sigma \exp(2K_b T) K_g+1],$$
where $K\triangleq M_\sigma K_x\exp(2K_b T).$
\end{lemma}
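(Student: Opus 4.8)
The plan is to establish the bound on $Z$ in two stages: first deriving a BMO-type a priori estimate that controls $\int_t^T |Z_s|^2 ds$ conditionally, and then upgrading this to a pointwise bound via a Malliavin-calculus representation of $Z$ together with a Gr\"onwall argument on the Malliavin derivative of $Y$. Throughout I would work with the Markovian FBSDE \eqref{M-BSDE} under Assumption \ref{functions}, and exploit that $g$ is bounded (by $M_g$) and $K_g$-Lipschitz, and that $f$ has quadratic growth $|f(t,x,z)|\le M_f+\frac{\alpha}{2}|z|^2$ with the locally Lipschitz estimates in $x$ and $z$.

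First I would recall that, since the terminal condition $g(X_T)$ is bounded, $Y$ is bounded (by the classical estimate, e.g.\ Corollary 2.2 in \cite{KobyBSDE}) and $Z\in \mathrm{BMO}$; applying It\^o's formula to $e^{\alpha Y}$ (or a similar exponential transform tuned to the quadratic constant $\alpha$) gives a uniform bound on $\EE[\int_\tau^T |Z_s|^2 ds \mid \mathcal{F}_\tau]$ over stopping times $\tau$, depending only on $\|Y\|_\infty$, $M_f$, $\alpha$ and $T$. Next, under Assumption \ref{functions} the coefficients $b,\sigma,f,g$ are smooth enough (or can be approximated by smooth functions) so that $(X,Y,Z)$ is Malliavin differentiable, and the Malliavin derivative $(D_s X_t, D_s Y_t, D_s Z_t)$ solves a linear FBSDE: $D_s X_t = \sigma(s) + \int_s^t \partial_x b(r,X_r) D_s X_r\, dr$ for $t\ge s$, and $D_s Y_t$ solves the linearized backward equation with terminal value $g'(X_T) D_s X_T$ and driver involving $\partial_x f(r,X_r,Z_r) D_s X_r + \partial_z f(r,X_r,Z_r) D_s Z_r$. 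The standard identification $Z_t = D_t Y_t$ then reduces everything to bounding $D_t Y_t$.

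The key estimates are: (i) from the forward equation, $|D_s X_t| \le M_\sigma \exp(K_b(t-s)) \le M_\sigma \exp(K_b T)$ pointwise (Gr\"onwall), but to absorb the quadratic driver I would instead track $\EE[\,\cdot\mid\mathcal{F}_t]$ and use $|D_s X_t|^2\le M_\sigma^2\exp(2K_b T)$-type bounds — this is where the factor $\exp(2K_bT)$ in $K$ comes from; (ii) writing the linear BSDE for $D_tY_t$ in the form $D_tY_t = g'(X_T)D_tX_T + \int_t^T [\,\partial_xf\cdot D_tX_r + \partial_zf\cdot D_tZ_r\,]dr - \int_t^T D_tZ_r\,dB_r$, I would apply a change of measure (Girsanov) that removes the $\partial_z f \cdot D_t Z_r$ term — legitimate precisely because $\partial_z f$ is of linear growth in $z$ and $Z\in\mathrm{BMO}$, so the stochastic exponential is a genuine martingale. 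Under the new measure $D_tY_t = \EE^{\mathbb{Q}}\big[g'(X_T)D_tX_T + \int_t^T \partial_xf(r,X_r,Z_r)D_tX_r\,dr \,\big|\,\mathcal{F}_t\big]$. Using $|g'|\le K_g$, $|\partial_x f(r,X_r,Z_r)|\le K_x(1+|Z_r|)$, the bound on $|D_tX_r|$, and — crucially — the conditional BMO bound on $\int_t^T(1+|Z_r|)|D_tX_r|\,dr$ under $\mathbb{Q}$ (BMO is preserved under this Girsanov change by the John–Nirenberg / Kazamaki-type arguments), I would close a Gr\"onwall-type inequality in the running-supremum of $|D_tY_t|$ and extract the claimed bound $|Z_t|=|D_tY_t| \le \exp(KT)[M_\sigma\exp(2K_bT)K_g + 1]$, with $K = M_\sigma K_x\exp(2K_bT)$. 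A final density/approximation argument removes the extra smoothness assumed for the Malliavin step, and the estimate passes to the limit since the bound is uniform.

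\textbf{Main obstacle.} The delicate point is handling the term $\partial_x f(r,X_r,Z_r) D_t X_r$, whose coefficient $\partial_x f$ is \emph{not} bounded but only linearly growing in $Z_r$ — this is exactly the gap the paper notes that \cite{Richou2011,Richou2012} do not cover. The resolution is to leverage the $\mathrm{BMO}$ norm of $Z$ (finite by stage one) so that $\int_t^T (1+|Z_r|)\,dr$ has exponential moments, controlled uniformly; combined with the bounded $|D_tX_r|$ this makes the conditional expectation of the running integral finite and, after the Girsanov reduction, Gr\"onwall-able. Getting the constants to come out exactly as stated (in particular isolating the "$+1$" from the $1$ in $K_x(1+|Z_r|)$ and the $\exp(2K_bT)$ from squaring the forward sensitivity) requires care but is routine once the BMO machinery is in place.
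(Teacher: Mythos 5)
Your overall architecture matches the paper's: linearize the FBSDE (the paper uses the flow derivatives $(\nabla X,\nabla Y,\nabla Z)$ rather than Malliavin derivatives, but with the same representation $Z_t=\nabla Y_t(\nabla X_t)^{-1}\sigma(t)$), remove the $\partial_z f$ term by a Girsanov change justified by the BMO property of $Z\ast B$, and recover the bound from the resulting conditional expectation, with a final smoothing/approximation step. However, your treatment of the problematic term $\partial_x f(r,X_r,Z_r)\,D_tX_r$ misses the one idea that actually makes the lemma work. The paper does \emph{not} control this term through the BMO norm of $Z$. Instead it substitutes the representation back into the coefficient: since $|Z_s|=|\nabla Y_s(\nabla X_s)^{-1}\sigma(s)|\leqslant M_\sigma e^{K_bT}|\nabla Y_s|$, one gets
\begin{equation*}
|\nabla_x f(s,X_s,Z_s)\nabla X_s|\leqslant K_x e^{K_bT}\bigl(1+M_\sigma e^{K_bT}|\nabla Y_s|\bigr)=K_x e^{K_bT}+K|\nabla Y_s|,\qquad K=M_\sigma K_x e^{2K_bT},
\end{equation*}
so the unbounded coefficient becomes \emph{linear in the unknown} $|\nabla Y_s|$ and is absorbed exactly by the weight $e^{Kt}$ when one applies the It\^o--Tanaka formula to $e^{Kt}|\nabla Y_t|$ (the $-Ke^{Ks}|\nabla Y_s|\,ds$ term cancels it, and the leftover constant $K_x e^{K_bT}$ integrates to the $\tfrac{1}{K}K_x e^{K_bT}e^{KT}$ that becomes the ``$+1$'' in the final bound). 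This is the precise sense in which the ``Gr\"onwall'' closes; BMO plays no role here beyond legitimizing the measure change for the $\partial_z f$ term.

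Your proposed alternative --- bounding $\EE^{\mathbb{Q}}\bigl[\int_t^T K_x(1+|Z_r|)|D_tX_r|\,dr\,\big|\,\mathcal{F}_t\bigr]$ via the (Girsanov-transformed) BMO norm of $Z$ --- would at best yield $|Z_t|\leqslant K_gM_\sigma e^{K_bT}+C_{\mathrm{BMO}}$, where $C_{\mathrm{BMO}}$ depends on $\|Z\ast B\|_{\mathrm{BMO}}$, hence on $M_f$, $\alpha$, $\|Y\|_\infty$ and $T$, but not in the explicit form $e^{KT}[M_\sigma e^{2K_bT}K_g+1]$. Moreover, as written there is nothing linear in the quantity you are trying to bound, so no Gr\"onwall inequality actually forms. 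This is not merely a matter of ``getting the constants to come out'': the downstream use of the lemma (Lemma \ref{uLip} and Lemma \ref{bddhatZ}) applies it on each subinterval with $K_g$ replaced by $L_i$ and needs the bound to be affine in the terminal Lipschitz constant with controlled slope, so that the recursion $L_{i-1}\leqslant L_ie^{K_1\Delta t_i}+K_2\Delta t_i$ closes via the discrete Gr\"onwall lemma. A $K_g$-independent additive constant coming from the BMO norm would not serve this purpose in the same explicit way, so you should replace the BMO control of the $\partial_x f$ term by the self-referential substitution above.
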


The proof is given in the appendix. Since the locally Lipschitz condition of $f$ with respect to $x$ in Assumption \ref{functions} involves $z$, we can not use the existing result, like in \cite{Richou2011} and \cite{Richou2012}, directly. Fortunately, we know that with bounded terminal value, the martingale $Z\ast B$ belongs to the space of BMO martingales, which can essentially help us to prove the boundness of $Z$ and further give the explicit form. \\

Next, let us give a useful lemma called discrete backward Gronwall Inequality, which will play an important role in the following content.
\begin{lemma}\label{Gronwall}
Let $\Pi$ and $\Delta t_i$ define as above. Suppose that $\{a_i,b_i\}^n_{i=1}$ satisfy $a_i\geqslant 0, b_i\geqslant 0$, and $a_{i-1}\leqslant e^{C \Delta t_i}a_i+b_i$ for $i=2,\cdots,n$, then
\begin{equation*}
\max_{1\leqslant i\leqslant n} a_i \leqslant e^{CT}\left[a_n+\sum^n_{i=1}b_i\right].
\end{equation*}
\end{lemma}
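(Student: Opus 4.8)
The statement is a purely elementary recursion, so the plan is simply to iterate the one-step inequality backwards from an arbitrary index up to $n$ and then to absorb all the accumulated exponential weights into a single factor $e^{CT}$, using that the mesh sizes sum to $T$.

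First I would fix $j\in\{1,\dots,n\}$ and prove, by downward induction on $j$, the explicit estimate
\begin{equation*}
a_j\;\leqslant\;e^{C(t_n-t_j)}a_n+\sum_{k=j+1}^{n}e^{C(t_{k-1}-t_j)}b_k ,
\end{equation*}
with the convention that the empty sum (occurring when $j=n$) is zero; the case $j=n$ is then just the trivial identity $a_n\leqslant a_n$ and serves as the base case. For the induction step, assuming the estimate at index $j$, I substitute it into the hypothesis $a_{j-1}\leqslant e^{C\Delta t_j}a_j+b_j$; since $\Delta t_j+(t_n-t_j)=t_n-t_{j-1}$ and $\Delta t_j+(t_{k-1}-t_j)=t_{k-1}-t_{j-1}$, while $b_j=e^{C(t_{j-1}-t_{j-1})}b_j$, the right-hand side collapses exactly to the claimed estimate at index $j-1$.

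It then remains to bound the weights. Because the $t_i$ are increasing with $t_0=0$ and $t_n=T$, every exponent appearing above lies in $[0,T]$; hence, assuming $C\geqslant 0$ (which is the case in all our applications, where $C$ is a Lipschitz/growth constant), we have $e^{C(t_n-t_j)}\leqslant e^{CT}$ and $e^{C(t_{k-1}-t_j)}\leqslant e^{CT}$. Since moreover all $b_k\geqslant 0$, this yields $a_j\leqslant e^{CT}\big(a_n+\sum_{k=1}^{n}b_k\big)$ for every $j\in\{1,\dots,n-1\}$; for $j=n$ the same bound holds trivially because $e^{CT}\geqslant 1$ and $b_k\geqslant 0$. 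Taking the maximum over $j$ gives the assertion.

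There is no real obstacle here: the argument is one clean induction. The only points requiring a little care are the index bookkeeping and the empty-sum convention in the explicit formula, and the observation that passing from $e^{C(t_{k-1}-t_j)}$ to $e^{CT}$ uses the sign of $C$; I would state the latter hypothesis explicitly (or, if one prefers full generality, replace $e^{CT}$ by $e^{|C|T}$ throughout, at no cost to the subsequent applications).
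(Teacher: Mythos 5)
Your proof is correct and follows essentially the same backward-induction unrolling as the paper, which absorbs the $b_k$ terms into the exponential factor at each step rather than tracking the exact weights $e^{C(t_{k-1}-t_j)}$ as you do. Your explicit remark that the final bound uses $C\geqslant 0$ (implicit in the paper, where $C$ is a nonnegative constant) is a minor but fair point of added care.
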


\begin{proof}
By backward induction, we have
\begin{equation*}
\begin{split}
&a_{n-1} \leqslant e^{C \Delta t_n}a_n+b_n\leqslant e^{C \Delta t_n}[a_n+b_n],\\
&a_{n-2} \leqslant e^{C \Delta t_{n-1}}a_{n-1}+b_{n-1}
\leqslant e^{C (\Delta t_{n-1}+\Delta t_n)}[a_n+b_n+b_{n-1}],\\
&\ldots \ldots
\end{split}
\end{equation*}
thus one can easily get that
$$a_i \leqslant  e^{C (T- t_i)}\left[a_n+\sum^n_{j=i+1}b_j\right],\quad i=1,\cdots,n,$$
which completes the proof.
\end{proof}\\

Now we can consider further property of the collection of functions $\{u^{\Pi}_i\}_{1\leqslant i\leqslant n}$.

\begin{lemma}\label{uLip}
Suppose Assumption \ref{functions} holds. Then $u^{\Pi}_i$ is bounded and Lipschitz continuous, uniformly in $\Pi$ and $i$.
\end{lemma}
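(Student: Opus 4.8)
The plan is to prove boundedness and uniform Lipschitz continuity by backward induction on $i = n, n-1, \dots, 1$, exploiting the recursive definition $u^{\Pi}_{i-1}(x) = \tilde Y_{t_{i-1}}(x) \vee g(x)$, where $\tilde Y$ solves the piecewise BSDE \eqref{u-FBSDE} with terminal data $u^{\Pi}_i(X_{t_i}(t_{i-1},x))$. Boundedness is essentially already in hand: by the same argument as Lemma \ref{bddY}(i), applying Corollary 2.2 of \cite{KobyBSDE} on each interval $[t_{i-1},t_i)$ together with the estimate $\vert f \vert \le M_f + \tfrac{\alpha}{2}\vert z\vert^2$ and $\Vert g\Vert_\infty \le M_g$, one gets $\Vert u^{\Pi}_i\Vert_\infty \le M_g + M_f T$ for all $i$, uniformly in $\Pi$; taking the maximum with $g$ preserves this bound. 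So the real content is the Lipschitz estimate.

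For the Lipschitz bound I would argue as follows. The base case $u^{\Pi}_n = g$ has Lipschitz constant $K_g$. For the inductive step, assume $u^{\Pi}_i$ is Lipschitz with some constant $L_i$ (to be tracked). Fix the interval $[t_{i-1},t_i]$ and regard \eqref{u-FBSDE} together with the forward SDE as a Markovian FBSDE on that interval with terminal function $\Phi_i(\cdot) \triangleq u^{\Pi}_i(\cdot)$; then $\tilde Y_{t_{i-1}}(x) = u_i^{\Pi}\circ(\text{flow}) $ evaluated through the BSDE, and its $x$-dependence is exactly that of the solution of a Markovian quadratic BSDE with bounded, Lipschitz terminal condition. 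I would invoke Lemma \ref{bddZ} (or rather its proof, applied on the shorter horizon $[t_{i-1},t_i]$ with terminal function $\Phi_i$ of Lipschitz constant $L_i$): the $Z$-component is bounded by $\exp(K\Delta t_i)[M_\sigma \exp(2K_b\Delta t_i)L_i + 1]$, which is itself bounded uniformly by a constant depending only on the data since $L_i$ will be shown to stay bounded. With $\tilde Z$ uniformly bounded, the generator $f(r,X_r,\tilde Z_r)$ becomes effectively globally Lipschitz in $x$ with constant $K_x(1+\Vert\tilde Z\Vert_\infty)$ on this interval. Then the standard stability estimate for Lipschitz FBSDEs — comparing the solutions started from $x$ and $x'$, using Gronwall on the forward SDE ($\vert X_t(t_{i-1},x) - X_t(t_{i-1},x')\vert \le e^{K_b\Delta t_i}\vert x-x'\vert$) and then on the backward equation — yields
\begin{equation*}
\vert \tilde Y_{t_{i-1}}(x) - \tilde Y_{t_{i-1}}(x')\vert \le e^{C_i \Delta t_i}\, L_i\, \vert x - x'\vert,
\end{equation*}
for a constant $C_i$ controlled by $K_b$, $M_\sigma$, and $K_x(1+\Vert\tilde Z\Vert_\infty)$, hence by a uniform constant $C$. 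Taking the maximum with $g$ gives $L_{i-1} \le \max(e^{C\Delta t_i}L_i,\, K_g) \le e^{C\Delta t_i} L_i$ (absorbing $K_g$ into $L_i \ge K_g$, which holds inductively). Iterating, $L_i \le e^{C(T-t_i)}(K_g + \text{lower-order terms}) \le e^{CT} K_g$ for all $i$, uniformly in $\Pi$, as desired; the discrete backward Gronwall inequality of Lemma \ref{Gronwall} packages this last step cleanly.

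The main obstacle is the circularity between the $Z$-bound and the Lipschitz constant: Lemma \ref{bddZ}'s bound on $\vert Z\vert$ depends on $K_g$, the Lipschitz constant of the terminal function, but here the terminal function is $u^{\Pi}_i$ whose Lipschitz constant $L_i$ is exactly what we are trying to control. The resolution is that the bound is \emph{multiplicative and exponential in the horizon length} $\Delta t_i$, not in $T$; so when one chains $n$ intervals the accumulated factor is $e^{C\sum\Delta t_i} = e^{CT}$, a fixed constant, rather than something that blows up with $n = \#\Pi$. One must be careful to set up the induction so that this stays manifest — i.e. track $L_i$ as a genuine sequence satisfying the hypothesis of Lemma \ref{Gronwall} with a \emph{uniform} $C$, and verify that the constant $C$ in the per-step estimate genuinely does not depend on $i$ or on the mesh (it depends only on $M_b,K_b,M_\sigma,M_f,K_x,\alpha,M_g,K_g,T$, through $\Vert u^{\Pi}_i\Vert_\infty \le M_g+M_fT$ and the resulting uniform bound on $\Vert\tilde Z\Vert_\infty$). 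A secondary technical point is that Lemma \ref{bddZ} as stated is on $[0,T]$; one needs its obvious localization to a subinterval $[t_{i-1},t_i]$ with a general bounded Lipschitz terminal function in place of $g(X_T)$, which the appendix proof should deliver verbatim.
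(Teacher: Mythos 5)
Your proposal follows essentially the same route as the paper: backward induction on $i$, a bound on $\tilde Z$ over $[t_{i-1},t_i]$ obtained by applying Lemma \ref{bddZ} to the system (\ref{u-FBSDE}) with terminal function $u^{\Pi}_i$ of Lipschitz constant $L_i$, a stability estimate for the resulting BSDE (the paper does this by linearizing in $z$ and a Girsanov change of measure), and finally Lemma \ref{Gronwall}. However, there are two imprecisions, both located exactly at the point you yourself flag as the main obstacle. First, the per-step estimate is not $|\tilde Y_{t_{i-1}}(x_1)-\tilde Y_{t_{i-1}}(x_2)|\leqslant e^{C\Delta t_i}L_i|x_1-x_2|$ but rather $\leqslant (L_ie^{K_1\Delta t_i}+K_2\Delta t_i)|x_1-x_2|$: the $x$-dependence of the generator contributes the additive term $\int_{t_{i-1}}^{t_i}K_x(1+|\tilde Z_r|)\,|X_r(t_{i-1},x_1)-X_r(t_{i-1},x_2)|\,dr$, which is not proportional to $L_i$ (it survives even if $K_g=0$); this is precisely why Lemma \ref{Gronwall} is invoked with nonzero $b_i$'s, and the correct conclusion is $\max_i L_i\leqslant e^{K_1T}(K_g+K_2T)$, not $e^{CT}K_g$. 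Second, your resolution of the circularity --- ``$C_i$ is controlled by $K_x(1+\Vert\tilde Z\Vert_\infty)$, hence by a uniform constant since $L_i$ will be shown to stay bounded'' --- is itself circular as written. What actually breaks the circle in the paper is that the $L_i$-dependence of every intermediate bound is kept explicit and \emph{linear}: the bound $|\tilde Z|\leqslant C(1+L_i)$ enters the final estimate only through the additive contribution $C(1+L_i)\Delta t_i|x_1-x_2|$, whose $L_i$-part $CL_i\Delta t_i$ is absorbed into $L_ie^{C\Delta t_i}$, while the Girsanov drift (also of size $C(1+L_i)$) drops out of the final bound altogether because the forward flow estimate $|X_r(t_{i-1},x_1)-X_r(t_{i-1},x_2)|\leqslant e^{K_b\Delta t_i}|x_1-x_2|$ is pathwise (the diffusion $\sigma$ is deterministic) and hence measure-independent. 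With these two adjustments your induction closes exactly as in the paper.
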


\begin{proof}
The first assertion in regard to the boundness actually can be proved following the same procedure as in Lemma \ref{bddY}. Now let us prove by induction that each $u^{\Pi}_i$ is Lipschitz continuous. Clearly, $u^{\Pi}_n=g$ is Lipschitz by assumption and the Lipschitz constant is $L_n\triangleq K_g$. Assuming that $u^{\Pi}_i$ is Lipschitz with constant $L_i$, then we need to show the result for $u^{\Pi}_{i-1}$. 

For any $x_1, x_2\in \mathbb{R}$,  $t\in[t_{i-1},t_i)$, denote $(\tilde{Y}^j,\tilde{Z}^j)$ as the solution to (\ref{u-FBSDE}) with initial value $x_j, j=1,2$. Regarding (\ref{u-FBSDE}) as the system (\ref{M-BSDE}) on $[t_{i-1},t_i]$ with the terminal value function $u^{\Pi}_i$ and by the Lipschitz and boundness assumption, we can get from Lemma \ref{bddZ} that $|\tilde{Z}^j_t|\leqslant C(1+L_i)$ on $[t_{i-1},t_i]$, where C denotes the constant which may depend on $T$ and all the constants appearing in the Assumption except $K_g$ and may vary from line to line. 
Consider the difference between the two solutions
\begin{equation}\label{Q^i}
\begin{split}
\tilde{Y}^1_t-\tilde{Y}^2_t=&u^{\Pi}_i(X_{t_i}(t_{i-1},x_1))-u^{\Pi}_i(X_{t_i}(t_{i-1},x_2))
-\int^{t_i}_t (\tilde{Z}^1_r-\tilde{Z}^2_r)dB_r\\
&+\int^{t_i}_t \left[ f(r,X_r(t_{i-1},x_1),\tilde{Z}^1_r)-f(r,X_r(t_{i-1},x_2),\tilde{Z}^2_r) \right]dr,\quad t\in[t_{i-1},t_i),
\end{split}
\end{equation}
and define
$$V^i_t=\left\{
\begin{array}{lcl}
\frac{f(t,X_t(t_{i-1},x_1),\tilde{Z}^1_t)-f(t,X_t(t_{i-1},x_1),\tilde{Z}^2_t)}{\tilde{Z}^1_t-\tilde{Z}^2_t} \mathbbm{1}_{\{\tilde{Z}^1_t\neq\tilde{Z}^2_t\}},&      &t\in[t_{i-1},t_i];\\
V^i_{t_{i-1}}, &      &t\in[0, t_{i-1}).
\end{array} \right.$$
Noting that $|V^i_t|\leqslant K_z(1+|\tilde{Z}^1_t|+|\tilde{Z}^2_t|)\leqslant C(1+L_i)$ for all  $t\in[0,t_i]$, we can rewrite (\ref{Q^i}) as
\begin{equation*}
\begin{split}
\tilde{Y}^1_t-\tilde{Y}^2_t=&u^{\Pi}_i(X_{t_i}(t_{i-1},x_1))-u^{\Pi}_i(X_{t_i}(t_{i-1},x_2))
-\int^{t_i}_t (\tilde{Z}^1_r-\tilde{Z}^2_r)dB_r
+\int^{t_i}_t (\tilde{Z}^1_r-\tilde{Z}^2_r)V^i_r dr\\
&+\int^{t_i}_t \left[f(r,X_r(t_{i-1},x_1),\tilde{Z}^2_r)-f(r,X_r(t_{i-1},x_2),\tilde{Z}^2_r)\right]dr\\
= &u^{\Pi}_i(X_{t_i}(t_{i-1},x_1))-u^{\Pi}_i(X_{t_i}(t_{i-1},x_2))
-\int^{t_i}_t (\tilde{Z}^1_r-\tilde{Z}^2_r)dB^{\mathbb{Q}^i}_r\\
&+\int^{t_i}_t \left[f(r,X_r(t_{i-1},x_1),\tilde{Z}^2_r)-f(r,X_r(t_{i-1},x_2),\tilde{Z}^2_r)\right]dr.
\end{split}
\end{equation*}
Here for the second equality, since $V^i_t$ is bounded on $[0,t_i]$, we can define an equivalent martingale measure $\mathbb{Q}^i$ on $\mathcal{F}_{t_i}$ by $\frac{d\mathbb{Q}^i}{d\mathbb{P}}=\mathcal{E}_{t_i}(\int^{\cdot}_0V^i_rdB_r)$, then we have that $B^{\mathbb{Q}^i}_t\triangleq B_t-\int^t_0 V^i_rdr$ is a standard Brownian motion under $\mathbb{Q}^i$. Since $\tilde{Z}^j$ is bounded on $[t_{i-1},t_i]$, we can obtain 
\begin{equation*}
\begin{split}
|\tilde{Y}^1_t-\tilde{Y}^2_t|  \leqslant &\EE^{\mathbb{Q}^i}_t \left|u^{\Pi}_i(X_{t_i}(t_{i-1},x_1))-u^{\Pi}_i(X_{t_i}(t_{i-1},x_2))\right|\\
&+\EE^{\mathbb{Q}^i}_t \left[  \int^{t_i}_t \left|f(r,X_r(t_{i-1},x_1),\tilde{Z}^2_r)-f(r,X_r(t_{i-1},x_2),\tilde{Z}^2_r)\right|dr\right] \\
\leqslant & L_i\EE^{\mathbb{Q}^i}_t|X_{t_i}(t_{i-1},x_1)-X_{t_i}(t_{i-1},x_2)|\\
&+\EE^{\mathbb{Q}^i}_t \left[  \int^{t_i}_t K_x(1+|\tilde{Z}^2_r|)|X_r(t_{i-1},x_1)-X_r(t_{i-1},x_2)|dr\right] \\
\leqslant & [L_i + C(1+L_i) \Delta t_i ]e^{K_b\Delta t_i }|x_1-x_2|,\quad t\in[t_{i-1},t_i),
\end{split}
\end{equation*}
where the last inequality above comes from standard estimate of forward SDE with deterministic diffusion term. Thus, we have $|\tilde{Y}^1_{t_{i-1}}-\tilde{Y}^2_{t_{i-1}}|\leqslant( L_i e^{K_1 \Delta t_i }+K_2 \Delta t_i)|x_1-x_2|$ by letting $K_1=C+K_b$ and $K_2=Ce^{K_b T}$. According to the definition of $u^\Pi_{i-1}$ and the inequality $|a_1\vee b_1-a_2 \vee b_2|\leqslant |a_1-a_2|\vee|b_1-b_2|$, we have
\begin{equation*}
\begin{split}
|u^\Pi_{i-1}(x_1)-u^\Pi_{i-1}(x_2)| & \leqslant |\tilde{Y}^1_{t_{i-1}}-\tilde{Y}^2_{t_{i-1}}|\vee |g(x_1)-g(x_2)|\\
& \leqslant [(L_i e^{K_1 \Delta t_i }+K_2 \Delta t_i)\vee L_n]|x_1-x_2|.
\end{split}
\end{equation*}
Therefore, we have proved that $u^\Pi_{i-1}$ is Lipschitz continuous and the Lipschitz constant satisfies $L_{i-1}\leqslant (L_i e^{K_1 \Delta t_i }+K_2 \Delta t_i)\vee L_n$, for $i=2,\cdots,n$.
Now it suffices to show that $(L_i)_{1\leqslant i \leqslant n}$ are uniformly bounded. Noting that $L_{i-1}\vee L_n \leqslant (L_i \vee L_n)e^{K_1 \Delta t_i }+K_2 \Delta t_i$, one can apply Lemma \ref{Gronwall} directly to obtain
$$\max_{1\leqslant i\leqslant n}L_i
\leqslant \max_{1\leqslant i\leqslant n}L_i\vee L_n
\leqslant e^{K_1T} (L_n+K_2 T)
=e^{K_1T} (K_g+K_2 T).$$
\end{proof}

At last, we can use the above subsidiary lemmas to obtain the boundness of $\hat{Z}^\Pi$ appearing in the discretization form.

\begin{lemma}\label{bddhatZ}
Suppose Assumption \ref{functions} holds. Then, we have $\hat{Z}^\Pi$ is bounded on $[0,T]$, uniformly in $\Pi$.
\end{lemma}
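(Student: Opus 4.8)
The plan is to leverage the identification established in Lemma \ref{u_i}, namely that on each subinterval $[t_{i-1},t_i)$ the process $\bar{Y}^{\Pi}$ coincides with the solution $\tilde{Y}$ of the FBSDE \eqref{u-FBSDE}, and consequently $\hat{Z}^{\Pi}$ coincides on $(t_{i-1},t_i)$ with the control process $\tilde{Z}$ of that FBSDE. This reduces the claim to bounding $\tilde{Z}$ on each piece, uniformly over $i$ and $\Pi$. First I would fix $i$ and view \eqref{u-FBSDE} on $[t_{i-1},t_i]$ as an instance of the generic Markovian FBSDE \eqref{M-BSDE}, but with horizon $[t_{i-1},t_i]$, forward starting point $(t_{i-1},X_{t_{i-1}})$, and terminal data function $u^{\Pi}_i$. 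By Lemma \ref{uLip}, $u^{\Pi}_i$ is bounded and Lipschitz with a constant $L_i$ that is bounded uniformly in $i$ and $\Pi$ by $e^{K_1 T}(K_g+K_2 T)$; the coefficients $b,\sigma,f$ satisfy Assumption \ref{functions} on the whole of $[0,T]$ and hence a fortiori on $[t_{i-1},t_i]$.

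Next I would apply Lemma \ref{bddZ} to this FBSDE. A subtlety is that Lemma \ref{bddZ} as stated is for horizon $[0,T]$ with terminal function $g$; I would note that its proof (given in the appendix) goes through verbatim for an arbitrary horizon $[a,b]\subseteq[0,T]$ and an arbitrary bounded Lipschitz terminal function $h$ with Lipschitz constant $K_h$, yielding a version of the control process bounded by $\exp(K(b-a))[M_\sigma \exp(2K_b(b-a))K_h+1]$ with $K=M_\sigma K_x\exp(2K_b(b-a))$. Since $b-a\leqslant T$, this is dominated by $\exp(KT)[M_\sigma\exp(2K_bT)K_h+1]$ with $K=M_\sigma K_x\exp(2K_bT)$, a bound that depends only on $K_h$ and the structural constants. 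Taking $h=u^{\Pi}_i$ and $K_h=L_i$, and invoking the uniform bound on $L_i$ from Lemma \ref{uLip}, we obtain
\begin{equation*}
|\tilde{Z}_t|\leqslant \exp(KT)\left[M_\sigma\exp(2K_bT)\,e^{K_1T}(K_g+K_2T)+1\right]=:C_Z,\quad t\in[t_{i-1},t_i],
\end{equation*}
with $C_Z$ independent of $i$ and of the partition $\Pi$.

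Finally, by Lemma \ref{u_i} and the uniqueness of the quadratic BSDE with bounded terminal condition (invoked there), $\hat{Z}^{\Pi}_t=\tilde{Z}_t$ for $t\in(t_{i-1},t_i)$ for each $i$, so $|\hat{Z}^{\Pi}_t|\leqslant C_Z$ for almost every $t\in[0,T]$, uniformly in $\Pi$, which is the assertion. The main obstacle I anticipate is the bookkeeping around Lemma \ref{bddZ}: one must be careful that its estimate genuinely has the claimed dependence — linear in the terminal Lipschitz constant with a multiplicative constant depending only on $T$ and the structural parameters $M_\sigma,K_b,K_x$ but not on the boundedness level of the terminal data nor on the starting point $x$ — since it is precisely this homogeneity that lets the per-interval bounds assemble into a uniform-in-$\Pi$ bound rather than degrading as the mesh is refined. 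Once that is confirmed (it follows from the BMO/Malliavin representation argument sketched before the lemma and carried out in the appendix), the rest is immediate.
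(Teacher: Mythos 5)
Your proposal is correct and follows essentially the same route as the paper: identify $\hat{Z}^\Pi$ with $\tilde{Z}$ on each subinterval via Lemma \ref{u_i}, apply Lemma \ref{bddZ} to the auxiliary system \eqref{u-FBSDE} with terminal function $u^\Pi_i$, and conclude from the uniform bound on the Lipschitz constants $L_i$ in Lemma \ref{uLip}. Your explicit remark that the estimate of Lemma \ref{bddZ} must carry over to an arbitrary subhorizon with a bound linear in the terminal Lipschitz constant and independent of the starting point is a point the paper passes over silently, and is worth making.
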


\begin{proof}
Applying Lemma \ref{bddZ} to the auxiliary forward-backward SDE system (\ref{u-FBSDE}), we can get $|\tilde{Z}_t|\leqslant \exp(KT)[M_\sigma \exp(2K_b T) L_i+1]$ on $[t_{i-1},t_i]$, where $K$ is defined the same as in the previous lemma and the bound is independent of the initial value of the system (\ref{u-FBSDE}). In turn, reviewing Lemma \ref{u_i}, we could obtain that $(\tilde{Y},\tilde{Z})$ and $(\bar{Y}^\Pi,\hat{Z}^\Pi)$ coincide on $[t_{i-1},t_i)$ by setting $x=X_{t_{i-1}}$ in  (\ref{u-FBSDE}), which indicates that $|\hat{Z}^\Pi_t|\leqslant \exp(KT)[M_\sigma \exp(2K_b T) L_i+1]$ on $[t_{i-1},t_i)$. Then by Lemma \ref{uLip}, the uniform boundness of $L_i$ guarantees that  $\hat{Z}^\Pi$ is bounded on the whole $[0,T]$ and the bound does not rely on $\Pi$.
\end{proof}

\subsection{Main result}
Now we are ready to give the main result of this paper.
\begin{theorem}\label{main} 
Let Assumption \ref{functions} hold. Then, we have the following estimate with $q={\frac{1}{2}}$:
\begin{equation}\label{main1}
\sup_{t\in[0,T]}\EE\left[ |\bar{Y}^\Pi_t-Y_t|^2\right]+\sup_{t\in[0,T]}\EE\left[ |\hat{Y}^\Pi_t-Y_t|^2\right] +\EE\left[\int^T_0 |\hat{Z}^\Pi_t-Z_t|^2dt\right]\leqslant C |\Pi|^q,
\end{equation}
\begin{equation}\label{main2}
\max_{1\leqslant i\leqslant n}\EE\left[\sup_{t\in[t_{i-1},t_i]}|\hat{Y}^\Pi_t-Y_t|^2+\sup_{t\in[t_{i-1},t_i]}|\bar{Y}^\Pi_t-Y_t|^2\right] \leqslant C |\Pi|^q
\end{equation}
and
\begin{equation}\label{main3}
\sup_{t\in[0,T]}\EE\left[ |\hat{K}^\Pi_t-K_t|\right]
+\max_{1\leqslant i\leqslant n}\EE\left[\sup_{t\in[t_{i-1},t_i]}|\hat{K}^\Pi_t-K_t|\right]
\leqslant C |\Pi|^{\frac{q}{2}}.
\end{equation}

In addition, if we further assume that $g$ is $C^2_b$, which means it is twice differentiable and all derivatives are uniformly bounded, we can obtain all the above estimates with $q=1$.
\end{theorem}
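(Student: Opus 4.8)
The plan is to linearise the quadratic driver, exploit the uniform bounds already at hand, and then run three separate estimates. Abbreviate $\delta Y:=Y-\hat{Y}^{\Pi}$, $\delta Z:=Z-\hat{Z}^{\Pi}$; by Lemmas~\ref{bddY}, \ref{bddZ}, \ref{bddhatZ} we have $\bar{Y}^{\Pi}\le\hat{Y}^{\Pi}\le Y$ with all three bounded by $M_g+M_fT$, and $Z,\hat{Z}^{\Pi}$ bounded uniformly in $\Pi$; we will also use that the continuous reflected solution satisfies $Y_t=u(t,X_t)$ with $u$ Lipschitz in space, so $Z$ is bounded by the same argument as in Lemma~\ref{bddZ}. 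Since $Y$ and $\hat{Y}^{\Pi}$ are driven by the \emph{same} forward process $X$, the generator difference carries no $x$-term, and by the local Lipschitz property in $z$ one may write $f(s,X_s,Z_s)-f(s,X_s,\hat{Z}^{\Pi}_s)=\beta_s\,\delta Z_s$ with $|\beta_s|\le K_z(1+|Z_s|+|\hat{Z}^{\Pi}_s|)\le C$. The law $\mathbb{Q}$ with $d\mathbb{Q}/d\mathbb{P}=\mathcal{E}_T(\int_0^\cdot\beta_s\,dB_s)$ is then an innocuous change of measure (density and its inverse in every $L^p$), under which, on each $[t_{i-1},t_i)$, $\delta Y$ solves a linear reflected equation with null driver and reflection increment $K_{t_i}-K_\cdot$, while $\hat{K}^{\Pi}$ is frozen on the interior.

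For the $Y$-gap I would use the optimal-stopping representations. By Theorem~\ref{continuous}, $Y_s=\sup_{\tau\in[s,T]}Y_s(\tau)$, attained at $D_s:=\inf\{u\in[s,T]:Y_u=g(X_u)\}$, while by construction $\hat{Y}^{\Pi}_s=\max_{\tau\in D[s,T]}Y_s(\tau)$; hence, with $\kappa(D_s)$ the first grid point $\ge D_s$, $0\le Y_s-\hat{Y}^{\Pi}_s\le Y_s(D_s)-Y_s(\kappa(D_s))$. Both sides solve quadratic BSDEs with driver $f$, terminal $g(X_\cdot)$ and uniformly bounded $Z$-components (Lemma~\ref{bddZ}), so stability after linearisation gives $\EE|Y_s-\hat{Y}^{\Pi}_s|^2\le C\,\EE|X_{D_s}-X_{\kappa(D_s)}|^2+C|\Pi|^2\le C|\Pi|$, uniformly in $s$ and $\Pi$, using the $L^p$-regularity of the forward SDE with deterministic diffusion and $|f|\le M_f+\tfrac{\alpha}{2}|Z|^2$. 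A BDG estimate on the martingale part of $\delta Y$ over each $[t_{i-1},t_i]$ upgrades this to the pathwise-supremum form, up to a term $\max_i\EE[(K_{t_i}-K_{t_{i-1}})^2]$; this I would bound by $C|\Pi|$ through $K_{t_i}-K_{t_{i-1}}=Y_{t_{i-1}}-Y_{t_i}-\int_{t_{i-1}}^{t_i}f\,ds+\int_{t_{i-1}}^{t_i}Z\,dB$ after localising $dK$ at the first contact time $\ge t_{i-1}$, using $Y_{t_i}\ge g(X_{t_i})$, the Lipschitz bound on $g$, forward regularity and the boundedness of $Z$.

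The core step is the $Z$-error. Applying It\^o to $|\delta Y|^2$ on each $[t_{i-1},t_i)$, writing $f(Z)-f(\hat{Z}^{\Pi})=\beta\,\delta Z$ and using Young to absorb $\tfrac12\int|\delta Z|^2$, and summing over $i$ — the boundary terms telescoping once one notes $|Y_{t_i}-\hat{Y}^{\Pi}_{t_i}|\le|Y_{t_i}-\bar{Y}^{\Pi}_{t_i}|$ (valid since $\hat{Y}^{\Pi}_{t_i}=\bar{Y}^{\Pi}_{t_i}\vee g(X_{t_i})$, $Y_{t_i}\ge g(X_{t_i})$, and $|a-b\vee c|\le|a-b|$ when $a\ge c$) — yields $\tfrac12\,\EE\int_0^T|\delta Z|^2\le C\int_0^T\EE|\delta Y_s|^2\,ds+2\,\EE\int_0^T(g(X_s)-\hat{Y}^{\Pi}_s)\,dK_s$, the last term coming from the flat-off identity $\int_0^T(Y_s-g(X_s))\,dK_s=0$. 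On $\mathrm{supp}(dK)\cap(t_{i-1},t_i)$ one has $g(X_s)=Y_s\ge\hat{Y}^{\Pi}_s\ge Y_s(t_i)$, so $g(X_s)-\hat{Y}^{\Pi}_s\le g(X_s)-Y_s(t_i)\le K_g\,\EE^{\mathbb{Q}}[|X_s-X_{t_i}|\,|\,\mathcal{F}_s]+C|\Pi|\le C|\Pi|^{1/2}(1+|X_s|)$ after linearising on $[s,t_i]$; hence $\EE\int_0^T(g(X_s)-\hat{Y}^{\Pi}_s)\,dK_s\le C|\Pi|^{1/2}\EE[(1+\sup_s|X_s|)K_T]\le C|\Pi|^{1/2}$. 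With the first step this gives (\ref{main1})--(\ref{main2}) with $q=\tfrac12$ (a discrete Gronwall argument, Lemma~\ref{Gronwall}, tidying the interval-wise recursion), and (\ref{main3}) follows by subtracting (\ref{Yhat}), telescoped, from the continuous equation, taking the supremum over $[t_{i-1},t_i]$ and applying BDG to $\int\delta Z\,dB$: the $L^2$-error in $Z$ enters under a square root, producing $q/2=\tfrac14$.

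I expect the reflection cross term $\EE\int_0^T(Y-\hat{Y}^{\Pi})\,dK$ to be the main obstacle: it is where the quadratic growth really matters — it forces the uniform bounds on $Z,\hat{Z}^{\Pi}$ so that the linearising change of measure is legitimate — and it is where the merely Lipschitz obstacle costs a full factor $|\Pi|^{1/2}$. When $g\in C^2_b$, It\^o's formula applied to $g(X_s)$ recovers the loss: on the contact set it identifies $Z_s=g'(X_s)\sigma(s)$ and $dK_s=[\,g'b+\tfrac12g''|\sigma|^2+f(\cdot,g'\sigma)\,]^-\,ds\le C(1+|X_s|)\,ds$, whence $\EE[(K_{t_i}-K_{t_{i-1}})^2]\le C|\Pi|^2$, and the same expansion improves the bound above to $g(X_s)-Y_s(t_i)\le C\,\EE^{\mathbb{Q}}[\int_s^{t_i}(1+|X_u|)\,du\,|\,\mathcal{F}_s]\le C|\Pi|$, since the first-order term $g'(X_s)\sigma(s)(B_{t_i}-B_s)$ has zero conditional mean. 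Propagating this through the three estimates turns every $|\Pi|^{1/2}$ into $|\Pi|$, giving $q=1$.
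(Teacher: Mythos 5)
Your treatment of the reflection cross term (restricting to the support of $dK$, replacing $Y_s$ by $g(X_s)$, comparing with the grid terminal $g(X_{t_i})$, and paying $|\Pi|^{1/2}$ versus $|\Pi|$ according to whether $g$ is Lipschitz or $C^2_b$) coincides with the paper's, and your route to the $Y$-estimate via the stopping-time representation ($0\le Y_s-\hat Y^\Pi_s\le Y_s(D_s)-Y_s(\kappa(D_s))$) is a legitimate alternative to the paper's argument. But there is a genuine gap at the centre of your $Z$-estimate: you linearise $f(s,X_s,Z_s)-f(s,X_s,\hat Z^\Pi_s)=\beta_s\,\delta Z_s$ with $|\beta_s|\le K_z(1+|Z_s|+|\hat Z^\Pi_s|)\le C$, which requires the $Z$-component of the \emph{continuous reflected} BSDE to be uniformly bounded. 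Nothing in the paper provides this: Lemma \ref{bddZ} is proved, via the variational equation and Malliavin calculus, for the non-reflected system (\ref{M-BSDE}) only, and that argument does not carry over verbatim to the reflected equation because the increasing process $K$ obstructs the differentiation; your one-line justification (``$Y_t=u(t,X_t)$ with $u$ Lipschitz in space, so $Z$ is bounded by the same argument'') asserts a nontrivial missing lemma rather than proving it. Without that bound, applying It\^o to $|\delta Y|^2$ leaves a term of the form $2\int|\delta Y_s|\,K_z|\delta Z_s|^2\,ds$, which cannot be absorbed into $-\int|\delta Z_s|^2\,ds$ unless $\|\delta Y\|_\infty<1/(2K_z)$ --- and that is false in general.

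The paper circumvents exactly this point: it keeps the quadratic remainder, estimating $|f(s,X_s,Z_s)-f(s,X_s,\hat Z^\Pi_s)|\le K_z(1+2M_z)|\Delta Z_s|+K_z|\Delta Z_s|^2$ using only the bound $M_z$ on $\hat Z^\Pi$ from Lemma \ref{bddhatZ}, and replaces $x\mapsto x^2$ by the test function $\psi(x)=\frac{1}{2K_z}(e^{2K_zx}-2K_zx-1)$, chosen so that $K_z\psi'+K_z-\frac12\psi''=0$; the sign property $\Delta Y=Y-\bar Y^\Pi\ge 0$ from Lemma \ref{bddY} is what makes $\psi$ usable and comparable to $|x|^2$ on $[0,M_\infty]$. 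If you wish to keep your linearisation, you must first establish boundedness of $Z$ for the quadratic \emph{reflected} Markovian BSDE (uniform spatial Lipschitz continuity of the optimal-stopping value function together with the identification $Z_t=\nabla_x u(t,X_t)\sigma(t)$), which is a substantial additional step not available in the paper; otherwise adopt the exponential test function. The remaining ingredients of your proposal --- the telescoping via $|Y_{t_i}-\hat Y^\Pi_{t_i}|\le|Y_{t_i}-\bar Y^\Pi_{t_i}|$, the discrete Gronwall recursion over the grid, the BDG upgrade to the pathwise supremum, and the $|\Pi|^{q/2}$ rate for $K$ obtained by expressing $\Delta K$ through $\Delta Y$, $\Delta Z$ and the driver difference --- match the paper's Steps 1--3.
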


\begin{proof}
The whole proof is divided into three steps.\\
\noindent\textbf{Step 1.}
Firstly, we claim the following estimate
\begin{equation}\label{step1}
\max_{1\leqslant i\leqslant n}\EE|\bar{Y}^\Pi_{t_{i}}-Y_{t_{i}}|^2+\EE\left[\int^T_0|\hat{Z}_t^\Pi-Z_t|^2dt\right]
\leqslant C |\Pi|^q.
\end{equation}

Recall the discretization form (\ref{Ybar}) and the reflected forward-backward SDE (\ref{M-FBSDE}), and notice that they are based on the same forward SDE. Denote $\Delta Y=Y-\bar{Y}^\Pi$, $\Delta \hat{Y}=Y-\hat{Y}^\Pi$ and $\Delta Z=Z-\hat{Z}^\Pi$. Apply It\^{o}'s formula to $\psi(\Delta Y_t)$ for an increasing $C^2$ function $\psi$ yet to be determined later, and we have for $t\in[t_{i-1},t_i)$,
\begin{equation}\label{psi}
\begin{split}
\psi(\Delta Y_t)=& \psi(\Delta \hat{Y}_{t_i})
+\int^{t_i}_t  \psi^{\prime}(\Delta Y_s)(f(s,X_s,Z_s)-f(s,X_s,\hat{Z}^\Pi_s))ds
-\int^{t_i}_t   \psi^{\prime}(\Delta Y_s)\Delta Z_s dB_s\\
&+\int^{t_i}_t  \psi^{\prime}(\Delta Y_s)dK_s
-\frac{1}{2}\int^{t_i}_t   \psi^{\prime\prime}(\Delta Y_s)|\Delta Z_s|^2ds.
\end{split}
\end{equation}

We deduce from Lemma \ref{bddhatZ} and Assumption \ref{functions} that
\begin{equation}\label{Deltaf}
\left|f(s,X_s,Z_s)-f(s,X_s,\hat{Z}^\Pi_s)\right|\leqslant K_z(1+|Z_s|+|\hat{Z}^\Pi_s|)|\Delta Z_s|
\leqslant K_z(1+2M_z)|\Delta Z_s|+K_z|\Delta Z_s|^2,
\end{equation}
where $M_z$ denotes the uniform bound of $\hat{Z}^\Pi$. Plugging the last inequality into (\ref{psi}) and using the assumption that $\psi$ is increasing, we have from Lemma \ref{bddY} that $|\Delta \hat{Y}_t|\leqslant |\Delta Y_t|$, and

\begin{equation}\label{psi1}
\begin{split}
\psi(\Delta Y_t)\leqslant& \psi(\Delta Y_{t_i}) 
-\int^{t_i}_t  \psi^{\prime}(\Delta Y_s)\Delta Z_s dB_s
+\int^{t_i}_t K_z(1+2M_z) \psi^{\prime}(\Delta Y_s)|\Delta Z_s|ds\\
&+\int^{t_i}_t  \left[K_z\psi^{\prime}(\Delta Y_s)-\frac{1}{2} \psi^{\prime\prime}(\Delta Y_s)\right]|\Delta Z_s|^2ds
+\int^{t_i}_t  \psi^{\prime}(\Delta Y_s)dK_s\\
\leqslant & \psi(\Delta Y_{t_i})
-\int^{t_i}_t   \psi^{\prime}(\Delta Y_s)\Delta Z_s dB_s
+\int^{t_i}_t \frac{K_z}{2}(1+2M_z)^2|\psi^{\prime}(\Delta Y_s)|^2ds\\
&+\int^{t_i}_t  \left[K_z\psi^{\prime}(\Delta Y_s)+\frac{K_z}{2}-\frac{1}{2} \psi^{\prime\prime}(\Delta Y_s)\right]|\Delta Z_s|^2ds
+\int^{t_i}_t  \psi^{\prime}(\Delta Y_s)dK_s,\\
\end{split}
\end{equation}
where the last inequality comes from H\"{o}lder's Inequality.

We now choose $\psi$ with the following form
\begin{equation*}
\psi(x)=\frac{1}{2K_z}(e^{2K_z x}-2K_z x-1),
\end{equation*}
such that $K_z\psi^{\prime}+K_z-\frac{1}{2} \psi^{\prime\prime}=0$, and it is straightforward to check that $\psi$ is a $C^\infty$ function, increasing on $[0,\infty)$ and satisfies $\psi(0)=0$. Furthermore, recalling Lemma \ref{bddY} and the boundness of $Y$ as the solution to forward-backward SDE (\ref{M-FBSDE}) with reflection and denoting $M_\infty\triangleq\Vert\bar{Y}^\Pi\Vert_\infty+\Vert Y\Vert_\infty$, we can then get the following properties of $\psi$ on $[0, M_\infty]$:
\begin{equation}\label{abc}
\begin{split}
&(a)\quad|\psi^{\prime}(x)|^2\leqslant C_1\psi(x),\\
&(b)\quad K_z|x|^2\leqslant \psi(x),\\
&(c)\quad \psi^{\prime}(x)\leqslant C_2 x,
\end{split}
\end{equation}
where $C_1=4K_z e^{2K_z M_\infty}$ and $C_2=C_1/2$.

Set $\tilde{C}\triangleq \frac{K_z}{2}(1+2M_z)^2 C_1$ and $\Lambda_t\triangleq e^{\tilde{C}t}$. Applying It\^{o}'s formula again to $\Lambda_t\psi(\Delta Y_t)$ and noting that $\Delta Y_t \geqslant 0$ by Lemma \ref{bddY}, we have
\begin{equation*}
\begin{split}
&\Lambda_t\psi(\Delta Y_t)+\frac{K_z}{2}\int^{t_i}_t \Lambda_s|\Delta Z_s|^2ds
\leqslant  \Lambda_{t_i}\psi(\Delta Y_{t_i})-\int^{t_i}_t \tilde{C}\Lambda_s\psi(\Delta Y_s)ds\\
&+ \frac{K_z}{2}(1+2M_z)^2 \int^{t_i}_t \Lambda_s|\psi^{\prime}(\Delta Y_s)|^2ds
-\int^{t_i}_t  \Lambda_s\psi^{\prime}(\Delta Y_s)\Delta Z_s dB_s
+\int^{t_i}_t  \Lambda_s \psi^{\prime}(\Delta Y_s)dK_s.\\
\end{split}
\end{equation*}
Noting (\ref{abc})-(a), we further have
\begin{equation}\label{Lpsi}
\begin{split}
\Lambda_t\psi(\Delta Y_t)+\frac{K_z}{2}\int^{t_i}_t \Lambda_s|\Delta Z_s|^2ds
\leqslant & \Lambda_{t_i}\psi(\Delta Y_{t_i})
-\int^{t_i}_t  \Lambda_s\psi^{\prime}(\Delta Y_s)\Delta Z_s dB_s
+\int^{t_i}_t  \Lambda_s \psi^{\prime}(\Delta Y_s)dK_s.\\
\end{split}
\end{equation}

In view of (\ref{abc})-(c), the integrand of the last term in (\ref{Lpsi}) is estimated as follows:
\begin{equation}\label{integrand}
\Lambda_s \psi^{\prime}(\Delta Y_s)dK_s\leqslant C_2 \Lambda_s\Delta Y_s dK_s=C_2 \Lambda_s(Y_s-\bar{Y}^\Pi_s)dK_s, \quad\forall s\in [t_{i-1},t_i].
\end{equation}
Then, the flat-off condition in (\ref{M-FBSDE}), (\ref{Ybar}) and the definition of $\hat{Y}^\Pi$ further yield that
\begin{equation}\label{dK}
\begin{split}
(Y_s-\bar{Y}^\Pi_s)dK_s
=&(g(X_s)-\bar{Y}^\Pi_s)dK_s
=\left[g(X_s)-\EE^{\mathcal{F}_s}\left(\hat{Y}^{\Pi}_{t_i}+\int^{t_i}_s f(r,X_r,\hat{Z}^{\Pi}_r)dr\right)\right]dK_s\\
=&\EE^{\mathcal{F}_s}\left[g(X_s)-\hat{Y}^{\Pi}_{t_i}-\int^{t_i}_s f(r,X_r,\hat{Z}^{\Pi}_r)dr\right]dK_s \\
\leqslant & \EE^{\mathcal{F}_s}\left[g(X_s)-g(X_{t_i})-\int^{t_i}_s f(r,X_r,\hat{Z}^{\Pi}_r)dr\right]dK_s.
\end{split}
\end{equation}

Next, we will consider two cases respectively in order to get finer convergence result when we have additional regularity assumption about the obstacle function $g$. Let Assumption \ref{functions} hold in both cases. We will utilize some standard estimates of forward SDE and use C to denote a universal constant that only depends on $K_g, K_b, M_b$ and $M_\sigma$ at this stage.

\noindent \textbf{Case \uppercase\expandafter{\romannumeral1}.} If $g$ is Lipschitz, we have 
\begin{equation*}
\EE^{\mathcal{F}_s}\left[g(X_s)-g(X_{t_i})\right]
\leqslant K_g\EE^{\mathcal{F}_s}|X_s-X_{t_i}|\leqslant C |\Pi|^{\frac{1}{2}}(1+|X_s|),
 \forall s\in[t_{i-1},t_i].
\end{equation*}

\noindent\textbf{Case \uppercase\expandafter{\romannumeral2}.} If $g$ is further in $C^2_b$, applying It\^o's formula to $g(X_t)$ gives that
\begin{equation*}
g(X_s)-g(X_{t_i})=-\int^{t_i}_s\left[g^\prime (X_r)b(r,X_r)+\frac{1}{2}g^{\prime\prime}(X_r)|\sigma(r)|^2\right]dr-\int^{t_i}_s g^\prime (X_r)\sigma(r)dB_r.
\end{equation*}
Supposing both $|g^\prime|$ and $|g^{\prime\prime}|$ are bounded by $K_g$ and taking conditional expectation, together with the assumptions of $b$ and $\sigma$, we can obtain that
\begin{equation*}
\begin{split}
\EE^{\mathcal{F}_s}\left[g(X_s)-g(X_{t_i})\right]
&\leqslant C \EE^{\mathcal{F}_s}\left[\int^{t_i}_s  (1+|X_r|)dr\right]
=C \int^{t_i}_s  (1+\EE^{\mathcal{F}_s}|X_r|)dr\\
&\leqslant C (t_i-s)(1+|X_s|)
\leqslant C |\Pi|(1+|X_s|),
\forall s\in[t_{i-1},t_i].
\end{split}
\end{equation*}

Combining the two cases together and letting $q=\frac{1}{2}$ when we have Lipschitz obstacle funtion and $q=1$ when considering $C^2_b$ obstacle with more regularity, we get
\begin{equation*}
\EE^{\mathcal{F}_s}\left[g(X_s)-g(X_{t_i})\right]
\leqslant C |\Pi|^q(1+|X_s|)
\leqslant C |\Pi|^q\left[1+\EE^{\mathcal{F}_s}\left(\sup_{0\leqslant t\leqslant T}|X_t|\right)\right],
\forall s\in[t_{i-1},t_i].
\end{equation*}
Plugging it back to (\ref{dK}) and making use of Lemma \ref{bddhatZ}, we have
\begin{equation}\label{dK1}
\begin{split}
(Y_s-\bar{Y}^\Pi_s)dK_s
&\leqslant \EE^{\mathcal{F}_s}\left[g(X_s)-g(X_{t_i})-\int^{t_i}_s f(r,X_r,\hat{Z}^{\Pi}_r)dr\right]dK_s \\
&\leqslant \left[C |\Pi|^q\left(1+\EE^{\mathcal{F}_s}\left(\sup_{0\leqslant t\leqslant T}|X_t|\right)\right)+\int^{t_i}_s \left(M_f+\frac{\alpha}{2}|\hat{Z}^{\Pi}_r|^2 \right)dr\right]dK_s\\
&\leqslant \left[C |\Pi|^q\left(1+\EE^{\mathcal{F}_s}[\mathcal{X}]\right)+\left(M_f+\frac{\alpha}{2}|M_z|^2 \right)|\Pi|\right]dK_s.
\end{split}
\end{equation}
Here we denote $\mathcal{X}\triangleq \sup_{0\leqslant t\leqslant T}|X_t|$, which is a $\mathcal{F}_T$-measurable and square-integrable random variable. Note that we will let the constant $C$ in the following further depend on $T, M_z, M_\infty,\EE|\mathcal{X}|^2,\EE|K_T|^2$ and all the constants appearing in Assumption \ref{functions}, which may vary from line to line as before. In turn, plugging the above estimate into (\ref{integrand}) and taking expectation, we can obtain that 
\begin{equation*}
\begin{split}
&\EE \left[\int^{t_i}_t  \Lambda_s \psi^{\prime}(\Delta Y_s)dK_s\right]
\leqslant C \EE \left[\int^{t_i}_t \Lambda_s (Y_s-\bar{Y}^\Pi_s)dK_s\right]\\
\leqslant &C |\Pi|^q \Lambda_{t_i} \EE \left[\int^{t_i}_t\left(1+\EE^{\mathcal{F}_s}[\mathcal{X}]\right)dK_s\right]
=C |\Pi|^q\Lambda_{t_i}   \EE[\left(1+\mathcal{X}\right)(K_{t_i}-K_t)].
\end{split}
\end{equation*}

Since $\psi^{\prime}(x)=e^{2K_z x}-1$ is bounded on $[0,M_\infty]$ and $Z\in\mathbb{H}^2([0,T];\rm)$, taking expectation on both sides of (\ref{Lpsi}) gives that for any $ t\in[t_{i-1},t_i]$,
\begin{equation}\label{ELpsi}
\EE\left[\Lambda_t\psi(\Delta Y_t) +\frac{K_z}{2}\int^{t_i}_t \Lambda_s|\Delta Z_s|^2ds\right]
\leqslant   \EE[\Lambda_{t_i}\psi(\Delta Y_{t_i})]
+C |\Pi|^q \Lambda_{t_i}   \EE[\left(1+\mathcal{X}\right)(K_{t_i}-K_t)], 
\end{equation}
which further implies 
\begin{equation*}
\begin{split}
\EE\left[\psi(\Delta Y_{t_{i-1}})\right]
&\leqslant  e^{\tilde{C}\Delta t_i} \{\EE[\psi(\Delta Y_{t_i})]
+C |\Pi|^q\EE[\left(1+\mathcal{X}\right)(K_{t_i}-K_{t_{i-1}})]\}\\
&\leqslant  e^{\tilde{C}\Delta t_i} \EE[\psi(\Delta Y_{t_i})]
+C |\Pi|^q\EE[\left(1+\mathcal{X}\right)(K_{t_i}-K_{t_{i-1}})]
\end{split}
\end{equation*}
by letting $t={t_{i-1}}$ and noting $e^{\tilde{C}\Delta t_i} \leqslant \Lambda_T$. Applying Lemma \ref{Gronwall} again and noticing the fact that $\Delta Y_{t_{n}}=0$, we can further obtain
\begin{equation}\label{ELpsi1}
\begin{split}
&\max_{1\leqslant i\leqslant n}\EE\left[\psi(\Delta Y_{t_{i}})\right]
\leqslant  e^{\tilde{C}T}
\left[C |\Pi|^q\sum^n_{i=1}\EE[\left(1+\mathcal{X}\right)(K_{t_i}-K_{t_{i-1}})]\right]\\
=&C |\Pi|^q\EE[\left(1+\mathcal{X}\right)K_T]
\leqslant C |\Pi|^q [\EE(1+|\mathcal{X}|^2)+\EE|K_T|^2]\leqslant C |\Pi|^q.
\end{split}
\end{equation}
Thus, we can conclude from (\ref{abc})-(b) and Lemma \ref{bddY} that
\begin{equation*}
\max_{1\leqslant i\leqslant n}\EE|\Delta \hat{Y}_{t_{i}}|^2
\leqslant\max_{1\leqslant i\leqslant n}\EE|\Delta Y_{t_{i}}|^2
\leqslant C |\Pi|^q.
\end{equation*}

Setting $t=t_{i-1}$ again in (\ref{ELpsi}) and taking summation from $i=1$ to $n$ on both sides give rise to
\begin{equation*}
\begin{split}
&\EE\left[\int^T_0|\Delta Z_s|^2ds\right]
\leqslant \sum^n_{i=1}\EE\left[ \int^{t_i}_{t_{i-1}}\Lambda_s|\Delta Z_s|^2ds\right]\\
\leqslant & C\left[ \EE[\Lambda_{t_n}\psi(\Delta Y_{t_n})]
+C |\Pi|^q\Lambda_T \sum^n_{i=1}\EE[\left(1+\mathcal{X}\right)(K_{t_i}-K_{t_{i-1}})]\right]
\leqslant C |\Pi|^q,
\end{split}
\end{equation*}
through the same arguments as in (\ref{ELpsi1}) and the fact $\psi(0)=0$. Consequently, (\ref{step1}) follows, and it is easy to check the other part of (\ref{main1}).\\


\noindent\textbf{Step 2.}
Taking supremum over $[t_{i-1},t_i]$ on both sides of  (\ref{Lpsi}), we can observe that
\begin{equation}\label{supE}
\begin{split}
\EE\left[\sup_{t\in[t_{i-1},t_i]}\Lambda_t\psi(\Delta Y_t)\right]
&\leqslant\EE[ \Lambda_{t_i}\psi(\Delta Y_{t_i})]
+\EE\left[\sup_{t\in[t_{i-1},t_i]}\left|\int^{t_i}_t  \Lambda_s\psi^{\prime}(\Delta Y_s)\Delta Z_s dB_s\right|\right]\\
&+\EE\left[\sup_{t\in[t_{i-1},t_i]}\int^{t_i}_t  \Lambda_s \psi^{\prime}(\Delta Y_s)dK_s\right].
\end{split}
\end{equation}

For the second term in the above inequality, applying B-D-G inequality and using (\ref{abc})-(a) and Young's Inequality, we obtain
\begin{equation*}
\begin{split}
&\EE\left[\sup_{t\in[t_{i-1},t_i]}\left|\int^{t_i}_t  \Lambda_s\psi^{\prime}(\Delta Y_s)\Delta Z_s dB_s\right|\right]
\leqslant  C \EE\left[\int^{t_i}_{t_{i-1}} | \Lambda_s\psi^{\prime}(\Delta Y_s)\Delta Z_s|^2 ds\right]^{1/2}\\
\leqslant& C\EE\left[\sup_{t\in[t_{i-1},t_i]} \Lambda_t\psi(\Delta Y_t) \int^{t_i}_{t_{i-1}}\Lambda_s|\Delta Z_s|^2 ds\right]^{1/2}\\
\leqslant & \frac{1}{2} \EE\left[\sup_{t\in[t_{i-1},t_i]} \Lambda_t\psi(\Delta Y_t) \right]
+C \EE\left[\int^{t_i}_{t_{i-1}}\Lambda_s|\Delta Z_s|^2 ds\right].
\end{split}
\end{equation*}
In turn, it follows that the third term is equal to
\begin{equation*}
\begin{split}
\EE\left[\int^{t_i}_{t_{i-1}}  \Lambda_s \psi^{\prime}(\Delta Y_s)dK_s\right]
\leqslant & C|\Pi|^q \Lambda_{t_i} \EE[(1+\mathcal{X})(K_{t_i}-K_{t_{i-1}})]\\
\leqslant& C|\Pi|^q \Lambda_T \EE[(1+\mathcal{X})K_T].
\end{split}
\end{equation*}
Plugging them back into (\ref{supE}) gives
\begin{equation*}
\begin{split}
\EE\left[\sup_{t\in[t_{i-1},t_i]} \Lambda_t\psi(\Delta Y_t) \right]
&\leqslant 2 \EE[ \Lambda_{t_i}\psi(\Delta Y_{t_i})]
+C \EE\left[\int^{t_i}_{t_{i-1}}\Lambda_s|\Delta Z_s|^2 ds\right]
+C|\Pi|^{q} \Lambda_T \EE[(1+\mathcal{X})K_T]\\
&\leqslant 2\Lambda_T \EE[ \psi(\Delta Y_{t_i})]
+C\Lambda_T \EE\left[\int^T_0 |\Delta Z_s|^2 ds\right]
+C|\Pi|^{q} \Lambda_T \EE[(1+\mathcal{X})K_T]\\
&\leqslant C\EE[ \psi(\Delta Y_{t_i})]+C|\Pi|^{q}.
\end{split}
\end{equation*}
Then by the result of the first step, we deduce that
\begin{equation*}
\max_{1\leqslant i\leqslant n} \EE\left[\sup_{t\in[t_{i-1},t_i]} \Lambda_t\psi(\Delta Y_t) \right]
\leqslant C \max_{1\leqslant i\leqslant n} \EE[ \psi(\Delta Y_{t_i})]+C|\Pi|^{q}
\leqslant C|\Pi|^{q},
\end{equation*}
and thus (\ref{main2}) follows.\\


\noindent\textbf{Step 3.}
Now we need to check the assertion related to $K$. From (\ref{M-FBSDE}), (\ref{Ybar}) and (\ref{Yhat}), we have
\begin{equation*}
\begin{split}
&\hat{K}^{\Pi}_{t}=\hat{K}^{\Pi}_{t_i}=\hat{Y}^{\Pi}_0
-[\bar{Y}^{\Pi}_{t}\mathbbm{1}_{\{t\neq t_i\}}+\hat{Y}^{\Pi}_{t}\mathbbm{1}_{\{t= t_i\}}]
-\int^{t}_0 f(r,X_r,\hat{Z}^{\Pi}_r)dr+\int^{t}_0 \hat{Z}^{\Pi}_r dB_r,\\
&K_t=Y_0-Y_t
-\int^{t}_0 f(r,X_r,Z_r)dr+\int^{t}_0 Z_r dB_r.
\end{split}
\end{equation*}
Denote $\Delta K \triangleq K-\hat{K}^{\Pi}$. It then follows that
\begin{equation*}
\Delta K_t=
\Delta \hat{Y}_0-[\Delta Y_t \mathbbm{1}_{\{t\neq t_i\}}+\Delta \hat{Y}_t\mathbbm{1}_{\{t= t_i\}}]
-\int^{t}_0 [f(r,X_r,Z_r)-f(r,X_r,\hat{Z}^{\Pi}_r)]dr+\int^{t}_0 \Delta Z_r dB_r.
\end{equation*}

Applying B-D-G inequality and moment inequality, together with the results proved in the former two steps and the estimate (\ref{Deltaf}), we then deduce that

\begin{equation*}
\begin{split}
\EE\left[\sup_{t\in[t_{i-1},t_i]} |\Delta K_t|\right]
\leqslant &\EE|\Delta \hat{Y}_0|+\EE\left[\sup_{t\in[t_{i-1},t_i]}\left( |\Delta Y_t|+ |\Delta \hat{Y}_t|\right)\right]
+\EE\left[\sup_{t\in[t_{i-1},t_i]} \left|\int^{t}_0 \Delta Z_r dB_r \right|  \right]\\
&+\EE\left[\sup_{t\in[t_{i-1},t_i]} \left|\int^{t}_0 [f(r,X_r,Z_r)-f(r,X_r,\hat{Z}^{\Pi}_r)]dr\right| \right]\\
\leqslant &[ \EE|\Delta \hat{Y}_0|^2]^{\frac{1}{2}}
+\left[  \EE\left(\sup_{t\in[t_{i-1},t_i]}\left( |\Delta Y_t|^2+ |\Delta \hat{Y}_t|^2\right)\right) \right]^{\frac{1}{2}}+\EE \left[ \left(\int^{t_i}_0| \Delta Z_r|^2 dr \right)^{\frac{1}{2}} \right]\\
&+\EE\left[\int^{t_i}_0 K_z(1+2M_z+|\Delta Z_r|) |\Delta Z_r|dr\right]\\
\leqslant & C|\Pi|^{\frac{q}{2}}+
C\left(\EE\left[\int^{t_i}_0 (1+|\Delta Z_r|)^2dr\right]\right)^{\frac{1}{2}} 
\left(\EE\left[\int^{t_i}_0 |\Delta Z_r|^2dr\right]\right)^{\frac{1}{2}} \\
\leqslant & C|\Pi|^{\frac{q}{2}}+
C|\Pi|^{\frac{q}{2}}(T+|\Pi|^{q})^{\frac{1}{2}} 
\leqslant  C|\Pi|^{\frac{q}{2}}.
\end{split}
\end{equation*}
Thus we obtain 
\begin{equation*}
\sup_{t\in[0,T]}\EE\left[ |\hat{K}^\Pi_t-K_t|\right]
\leqslant \max_{1\leqslant i\leqslant n}\EE\left[\sup_{t\in[t_{i-1},t_i]}|\hat{K}^\Pi_t-K_t|\right]
\leqslant C |\Pi|^{\frac{q}{2}},
\end{equation*}
and the proof is complete.
\end{proof}\\

\section{Conclusions}

We have characterized the continuous and discrete optimal investment stopping problem separately and provided the convergence result, which comes down to the convergence from discretely to continuously quadratic reflected BSDE, via the tools of quadratic BSDE with bounded terminals. While at present, we need the bounded assumption due to technical restriction when we try to apply the method in Lipschitz case to Quadratic, and what we discussed here is actually a specific form of quadratic generator without $y$ involved, since the BSDE essentially originates from the utility maximization problem. We may further consider the real discrete scheme for the quadratic reflected BSDE which will indicate the way to solve the optimal investment stopping problem numerically, as well as generalize the settings about generator and terminal value in future research.


\begin{appendix}
\section{Proof of  Lemma \ref{bddZ}}
\begin{proof}
As in the literature, we suppose that the functions $b,g$ and $f$ in the forward-backward SDE (\ref{M-BSDE}) are differentiable with respect to $x$ and $z$ firstly. Thus the solution $(X,Y,Z)$ is differentiable with respect to $x$ and $(\nabla X,\nabla Y,\nabla Z)$ satisfies the following SDE and BSDE
\begin{equation}\label{ASDE&BSDE}
\begin{split}
\nabla X_t &=1+\int^t_0 \nabla b(s,X_s)\nabla X_s ds,\\
\nabla Y_t &=\nabla g(X_T)\nabla X_T-\int^T_t \nabla Z_s dB_s
+\int^T_t  [\nabla_x f(s,X_s,Z_s)\nabla X_s+\nabla_z f(s,X_s,Z_s)\nabla Z_s ]ds.
\end{split}
\end{equation}

Moreover, we can deduce from Assumption \ref{functions} that the coefficients appearing in the above equations satisfy $|\nabla b(t,x)| \leqslant K_b$, $|\nabla g(x)| \leqslant K_g$, $|\nabla_x f(t,x,z)| \leqslant K_x(1+|z|)$ and $|\nabla_z f(t,x,z)| \leqslant K_z(1+2|z|)$ respectively. 

Thanks to the Mallivian calculus, it is classical to show that a version of $(Z_t)_{t\in[0,T]}$ is given by $(\nabla Y_t  (\nabla X_t )^{-1} \sigma(t))_{t\in[0,T]}$. Then, noting that both $|\nabla X_t|$ and $|(\nabla X_t )^{-1} |$ are bounded by $e^{K_b T}$ for any $t \in[0,T]$, we have the following estimate
\begin{equation}\label{Afpartialx}
\begin{split}
&|\nabla_x f(s,X_s,Z_s)\nabla X_s|\leqslant K_x(1+|Z_s|)|\nabla X_s|\\
\leqslant &K_x(1+|\nabla Y_t  (\nabla X_t )^{-1} \sigma(t)|)e^{K_b T}\leqslant K_x e^{K_b T}(1+e^{K_b T} M_\sigma |\nabla Y_t| ).
\end{split}
\end{equation}
Let $K\triangleq K_x  M_\sigma e^{2K_b T}$. Applying It\^{o}-Tanaka's formula to $e^{Kt}|\nabla Y_t |$, we obtain 
\begin{equation}\label{AunderB}
\begin{split}
e^{Kt}|\nabla Y_t | =& e^{KT}|\nabla g(X_T)\nabla X_T|
-\int^T_t K e^{Ks}|\nabla Y_s| ds
-\int^T_t sgn(\nabla Y_s)e^{Ks} \nabla Z_s dB_s\\
&+\int^T_t sgn(\nabla Y_s)e^{Ks} [\nabla_x f(s,X_s,Z_s)\nabla X_s+\nabla_z f(s,X_s,Z_s)\nabla Z_s ]ds\\
&-\int^T_t  e^{Ks}dL_s,
\end{split}
\end{equation}
where $L$ is a real-valued, adapted, increasing and continuous process known as local time of $\nabla Y$ at level $0$. The BMO property of $Z\ast B$ and the fact that $|\nabla_z f(t,x,z)| \leqslant K_z(1+2|z|)$ guarantee that
\begin{equation*}
\begin{split}
&\left|\left|\int^{\cdot}_0 \nabla_z f(s,X_s,Z_s)dB_s\right|\right|^2_{BMO}
=\sup_{\tau \in [0,T]} \EE \left[ \int^{T}_\tau |\nabla_z f(s,X_s,Z_s)|^2ds \Big| \mathcal{F}_\tau \right]\\
\leqslant & C \left( 1+\sup_{\tau \in [0,T]} \EE \left[ \int^{T}_\tau |Z_s|^2ds \Big| \mathcal{F}_\tau \right]\right)
=C(1+\left|\left|Z\ast B\right|\right|^2_{BMO}) < \infty,
\end{split}
\end{equation*}
which further implies that $\mathcal{E}(\int^{\cdot}_0 \nabla_z f(s,X_s,Z_s)dB_s)_t$ is a uniformly integrable martingale. In turn, we are able to apply Girsanov theorem and rewrite (\ref{AunderB}) under the equivalent probability $\mathbb{Q}$ as
\begin{equation*}
\begin{split}
e^{Kt}|\nabla Y_t | 
\leqslant &  e^{KT}|\nabla g(X_T)\nabla X_T|
-\int^T_t K e^{Ks}|\nabla Y_s| ds
-\int^T_t sgn(\nabla Y_s)e^{Ks} \nabla Z_s dB^\mathbb{Q}_s\\
&+ \int^T_t e^{Ks} (K_x e^{K_b T}+K|\nabla Y_s | )ds\\
\leqslant &  e^{KT}|\nabla g(X_T)\nabla X_T|
-\int^T_t sgn(\nabla Y_s)e^{Ks} \nabla Z_s dB^\mathbb{Q}_s
+\frac{1}{K}K_x e^{K_b T} e^{KT},\\
\end{split}
\end{equation*}
where we used the estimate (\ref{Afpartialx}) and the fact $dL_t\geqslant 0$, and $B^\mathbb{Q}_t\triangleq B_t-\int^t_0  \nabla_z f(s,X_s,Z_s)ds$ is a standard Brownian motion under $\mathbb{Q}$.
Then, taking conditional expectation on both sides and noticing that $\nabla Z$ is actually the second component of the solution to BSDE in (\ref{ASDE&BSDE}), we obtain
\begin{equation*}
\begin{split}
e^{Kt}|\nabla Y_t | 
\leqslant& \EE^\mathbb{Q}\left[  e^{KT}|\nabla g(X_T)\nabla X_T|
+\frac{1}{K}K_x e^{K_b T} e^{KT} \Big| \mathcal{F}_t \right]\\
\leqslant&  e^{KT}e^{K_b T}[K_g+K_x/K].
\end{split}
\end{equation*}
Using the expression of $Z_t$ again, we can finally deduce that for any $t\in[0,T]$,
\begin{equation*}
\begin{split}
&|Z_t|= |\nabla Y_t  (\nabla X_t )^{-1} \sigma(t)|\\
\leqslant & e^{K_b T}M_\sigma|\nabla Y_t | 
\leqslant e^{KT}[e^{2K_b T}M_\sigma K_g+1].
\end{split}
\end{equation*}

We conclude the proof by noting that when $b,g$ and $f$ are not differentiable, one can also prove the result by a standard approximation and stability results for BSDEs.
\end{proof}
\end{appendix}


\small


\end{document}